\documentclass{article}

\usepackage{multirow}
\usepackage{listings}
\usepackage{hyperref}
\usepackage{amsfonts}
\usepackage{amsmath}
\usepackage{amsthm}
\usepackage[capitalise, noabbrev]{cleveref}
\usepackage{mathpartir}
\usepackage{stmaryrd} 
\usepackage{tikz}
\usetikzlibrary{calc,positioning,fit,backgrounds}
\definecolor{cbBlue}{RGB}{0,114,178}
\pgfdeclarelayer{bgOuter}
\pgfdeclarelayer{bgForml}
\pgfsetlayers{bgOuter,bgForml,main}

\newcommand\graphalg{\textsf{GraphAlg}}
\DeclareMathOperator{\pickAny}{\mathsf{pickAny}}
\DeclareMathOperator{\cast}{\mathsf{cast}}

\newcommand{\Idim}{\mathcal{D}}
\newcommand{\Imat}{\mathsf{mat}}

\newcommand{\ringBool}{\mathbb{B}_{\lor,\land}}
\newcommand{\ringInt}{\mathbb{Z}_{+,\times}}
\newcommand{\ringReal}{\mathbb{R}_{+,\times}}
\newcommand{\ringTropInt}{\mathbb{Z}_{\min,+}}
\newcommand{\ringTropReal}{\mathbb{R}_{\min,+}}
\newcommand{\ringTropMaxInt}{\mathbb{Z}_{\max,+}}
\newcommand{\ringTropMaxReal}{\mathbb{R}_{\max,+}}

\newcommand{\encoded}[1]{\langle #1 \rangle}

\newtheorem{theorem}{Theorem}
\newtheorem{example}{Example}
\newtheorem{lemma}{Lemma}
\newtheorem{corollary}{Corollary}

\newcommand{\ML}{\mathsf{MATLANG}}
\newcommand{\transp}[1]{#1^T}
\newcommand{\SyntaxStyle}{\mathsf}
\newcommand{\LetIn}[3]{\SyntaxStyle{let}\ #1=#2\ \SyntaxStyle{in}\ #3}

\DeclareMathOperator{\diag}{\mathsf{diag}}
\DeclareMathOperator{\Apply}{\mathsf{apply}}

\newcommand{\one}{\mathbf{1}}

\newcommand{\M}{M}

\newcommand{\ffor}[3]{\texttt{for}\, #1,#2 \texttt{.}\, #3}
\newcommand{\Mvar}{\mathcal{V}}
\newcommand{\I}{\mathcal{I}}
\newcommand{\Sch}{\mathcal{S}}
\newcommand{\ttype}{\textsf{type}}
\newcommand{\ttypes}{\textsf{type}_{\Sch}}
\newcommand{\sem}[2]{\llbracket #1 \rrbracket(#2)}
\newcommand{\Mnam}{\mathcal{M}}

\lstdefinelanguage{GraphAlg}{
  morekeywords={func, for, in, until, return}, 
  morekeywords={Matrix, Vector, bool, int, real, trop\_int, trop\_real, trop\_max\_int}, 
  morekeywords={T, diag, apply, select, tril, triu, reduceRows, reduceCols, reduce, pickAny, zero, one, cast}, 
  morekeywords={nrows, cols, nvals}, 
  keywordstyle=\color{blue}\bfseries,
  identifierstyle=\color{black},
  sensitive=false,
  comment=[l]{//},
  commentstyle=\color{purple}\ttfamily,
}

\title{Foundations of the GraphAlg Language}
\date{}
\author{
    Daan de Graaf \\ {\small \href{mailto:d.j.d.graaf@tue.nl}{d.j.a.d.graaf@tue.nl}}
    \and Robert Brijder \\ {\small \href{mailto:r.brijder@tue.nl}{r.brijder@tue.nl}}
    \and Nikolay Yakovets \\ {\small \href{mailto:hush@tue.nl}{hush@tue.nl}}
}

\begin{document}

\maketitle

\begin{abstract}
    The \graphalg{} domain-specific language for graph algorithms enables user-defined algorithms in graph databases.
    In this work we show how \graphalg{} is built on top of the formal $\ML$ language for matrix manipulation.
    Starting from $\ML$, we describe the extensions to $\ML$ and the syntactic sugar needed to derive \graphalg{}.
    Furthermore, we prove that any \graphalg{} program can be simulated in an extension of for-$\ML$ that supports simultaneous induction.
\end{abstract}

\section{Introduction}
Graph algorithms are key components in many analytical workloads over graph data.
Despite graph databases being the natural home for such data, typical graph query languages such as Cypher~\cite{francis_cypher_2018} or GQL~\cite{iso_information_2024} are unable to express algorithms such as PageRank or Weakly Connected Components, forcing users to export their graph data and process it in external tools.
To bring user-defined algorithm support to graph databases, the \graphalg{} language has been proposed~\cite{graaf_algorithm_2026}.
\graphalg{} is a domain-specific language for writing graph algorithms in the language of linear algebra.
It is specifically designed to be embedded in databases: the \graphalg{} compiler~\cite{de_graaf_wildarchgraphalg_2026} transforms programs into an extended relational algebra that can be executed by an existing relational query processing engine with minimal changes.

The design of \graphalg{} is based on $\ML$~\cite{brijder_expressive_2019}, a formal language for matrix manipulation.
The $\ML$ language follows a principled and minimal design suitable for high-level optimization, with an established connection to relational algebra~\cite{brijder_matrices_2022}, making it an ideal starting point for \graphalg{}.
In this work we describe and motivate a number of extensions to $\ML$ for expressing a diverse set of graph algorithms, each adding a small feature to the previous extension, to eventually arrive at the \graphalg{} language.

By formalizing the connection between \graphalg{} and $\ML$, results from existing studies of $\ML$ can be leveraged for \graphalg{}.
For example, $\ML$ has been shown to correspond to a natural fragment of relational algebra~\cite{brijder_matrices_2022}.
We establish in this work that \graphalg{} programs can be simulated in an extension of for-$\ML$ that supports simultaneous induction (\cref{cor:sifor-eq}), suggesting that loop-free \graphalg{} translates to relational algebra.
Indeed, we implement this exact translation in the \graphalg{} compiler.

\paragraph*{Simultaneous Induction.}
Many graph algorithms require iteration, which is the primary reason that Cypher and GQL (language without loops or recursion) are insufficient.
While iteration support in $\ML$ has been previously studied~\cite{geerts_matrix_2021,geerts_expressive_2021}, the proposed extensions only allow a single matrix value to be carried over between loop iterations.
Various graph algorithms, however, are naturally expressed using multiple state variables carried over between loop iterations, also called \emph{simultaneous induction}.
Our proposed extension sifor-$\ML$, based on for-$\ML$~\cite{geerts_expressive_2021}, adds support for simultaneous induction to $\ML$.

\paragraph*{Leader Election and Loop Decomposition.}
Another important problem in graph analytics is the selection of a representative vertex from a set, also called \emph{leader election}.
We observe that while $\ML$ has no specific leader election operation, it can be simulated in sifor-$\ML$ using the powerful loop construct.
Conversely, if we do add a leader election primitive to the language, then the loop construct is unnecessarily powerful.
We propose an extension that decomposes the expressive power of the sifor-$\ML$ loop construct into two orthogonal constructs.
The resulting dec-$\ML$ language has simpler individual operations and can naturally express leader election, yet it has the same expressive power as sifor-$\ML$.

\paragraph*{Multiple Semirings.}
The next issue we face is that $\ML$ expressions operate over a fixed semiring, while the natural semiring to use depends on the specific algorithm.
Complex algorithms can even use multiple semirings at different stages, requiring casting of values between different semirings.
We propose muse-$\ML$, an extension of dec-$\ML$ that allows using multiple semirings within an expression.
We find that, perhaps counterintuitively, this extension does not increase the expressive power of the language: muse-$\ML$ and dec-$\ML$ (and therefore also sifor-$\ML$) have equal expressive power.

\paragraph*{Executable Specification.}
The $\ML$ language and its extensions mentioned above are all formal languages:
some details of the languages are irrelevant to theoretical analysis, and are therefore only loosely defined.
In particular, $\ML$ does not specify the set $\Omega$ of allowed pointwise functions, or which semirings may be used.
To create an executable specification of muse-$\ML$, we fix a particular $\Omega$ through a grammar, and define a set of semirings that is broad enough to support many graph algorithms.
The resulting language is \graphalg{} Core, the small subset of \graphalg{} that is used in the \graphalg{} compiler for optimization.

\paragraph*{Syntactic Sugar.}
While a small core subset is indispensable for compiler analysis and optimization, it can be cumbersome to write as an end user.
The `full' \graphalg{} language extends \graphalg{} Core with additional syntactic sugar for patterns commonly used in graph algorithms, without changing the expressive power.
We also adopt an imperative programming style with functions and statements that resembles Python, which we anticipate will be more familiar to users than the purely functional nature of \graphalg{} Core.

\bigskip
\noindent
The remainder of this paper is organized as follows.
In \cref{sec:prelim} we recall the definition of $\ML$, as well as the extension for-$\ML$ that adds iteration in the form of a loop construct.
\cref{sec:sifor-ml} defines sifor-$\ML$, which adds simultaneous induction.
In \cref{sec:dec-ml} we decompose this loop construct to obtain dec-$\ML$.
Next, \cref{sec:muse-ml} proposes muse-$\ML$, allowing multiple semirings to be used within a single expression.
We also show that muse-$\ML$ and dec-$\ML$ have equal expressive power.
In \cref{sec:core-lang} we introduce \graphalg{} Core, fixing $\Omega$ and the set of supported semirings.
Finally, in \cref{sec:full-lang} we give the formal definition of \graphalg{} and describe the procedure for converting programs into \graphalg{} Core expressions.

\section{Preliminaries}\label{sec:prelim}
We start by recalling $\ML$~\cite{brijder_expressive_2019}, followed by its extension for-$\ML$~\cite{geerts_expressive_2021}.

\subsection{MATLANG}
$\ML$ is a formal language for matrix manipulation using common matrix operations and linear algebra.
Previous work on $\ML$ has studied its expressive power~\cite{brijder_expressive_2019} and connection to relational algebra~\cite{brijder_matrices_2022}.

We assume a countably infinite set of matrix variables $\Mvar$ that serve as the inputs to $\ML$ expressions.
We also assume a repertoire $\Omega$ of functions $f : R^k \to R$, where $R$ is a semiring.
While we make no assumptions about $R$, previous works have used the complex numbers~\cite{brijder_expressive_2019} or the real numbers~\cite{geerts_expressive_2021} for $R$.
The syntax of $\ML$ expressions is defined by the following grammar:
\begin{align*}
    e & ::= \M \in \Mvar                      &  & \text{(matrix variable)}             \\
      & \mid\quad \transp{e}                  &  & \text{(transpose)}                   \\
      & \mid\quad \one(e)                     &  & \text{(one-vector)}                  \\
      & \mid\quad \diag(e)                    &  & \text{(diagonalization of a vector)} \\
      & \mid\quad e_1 \cdot e_2               &  & \text{(matrix multiplication)}       \\
      & \mid\quad \Apply[f](e_1, \ldots, e_k) &  & \text{(pointwise
        application, $f \in \Omega$)}
\end{align*}

The definition of $\ML$ in \cite{brijder_expressive_2019} also includes a let binding expression $\LetIn{\M}{e_1}{e_2}$ for convenience.
As noted in \cite{brijder_expressive_2019}, this operation can be omitted without affecting the expressive power of the language, so we omit it from our definition.

\subsubsection{Type System}
We now recall the type system for $\ML$ from \cite{brijder_expressive_2019}.
We assume a sufficient supply of \emph{size symbols}, denoted $\alpha,\beta,\gamma$, that represent the number of rows or columns of a matrix (e.g., $\alpha \times \beta$).
Together with an explicit $1$, we can express column vectors ($\alpha \times 1$), row vectors ($1 \times \beta$) and scalars ($1 \times 1$).
A \emph{size term} is either a size symbol or an explicit $1$.
Formally, a \emph{type} is an ordered pair of size terms $s_1$ and $s_2$, denoted $s_1 \times s_2$.
A \emph{schema} $\Sch$ is a function defined over a finite set of matrix variables $\Mnam \subset \Mvar$ that assigns a type to each element of $\Mnam$.
By $\Sch[M := \alpha \times \beta]$ we denote the schema obtained from $\Sch$ by mapping $M$ to $\alpha \times \beta$.
The type of an expression $e$ with respect to a schema $\Sch$, if it exists, is denoted by $\ttypes(e)$.
It is inductively defined as follows:

\begin{itemize}
    \item $\ttypes(M):= \Sch(M)$, for a matrix variable $M\in\Mnam$;
    \item $\ttypes(\transp{e}):= \beta \times \alpha$ if $\ttypes(e)=\alpha \times \beta$;
    \item $\ttypes(\one(e)):= \alpha \times 1$ if $\ttypes(e)=\alpha \times \beta$;
    \item $\ttypes(\diag(e)):= \alpha \times \alpha$, if $\ttypes(e)=\alpha \times 1$;
    \item $\ttypes(e_1 \cdot e_2):= \alpha \times \gamma$ if  $\ttypes(e_1)=\alpha \times \beta$, and $\ttypes(e_2)=\beta \times \gamma$;
    \item $\ttypes(\Apply[f](e_1,\ldots ,e_k)):= \alpha \times \beta$, if $\ttypes(e_i) = \alpha \times \beta$ for all $i \in \{1,\ldots,k \}$ and $f \in \Omega$.
\end{itemize}

We call an expression $e$ \textit{well-typed} according to the schema $\Sch$ if $\ttypes(e)$ exists.

\subsubsection{Semantics}
We denote the dimensions of a matrix $A$ by $\dim(A) = m \times n$.
An \emph{instance} $\I$ over a schema $\Sch$ is a pair of functions $\I = (\Idim, \Imat)$,
where $\Idim$ maps size terms to matrix dimensions, and $\Imat$ assigns a concrete matrix to each matrix variable $M \in \Mnam$,
such that $\dim(\Imat(M)) = \Idim(\alpha) \times \Idim(\beta)$ if $\ttypes(M) = \alpha \times \beta$.
We use the notation $\I(M)$ as a shorthand for $\Imat(M)$.
By $\I[M := A]$ we denote the instance obtained from $\I$ by mapping matrix variable $M$ to value $A$.
We write $(\I[M_1 := A])[M_2 := B]$ as $\I[M_1 := A, M_2 := B]$.
The \emph{evaluation} of an expression $e$ with respect to an instance $\I$, denoted by $\sem{e}{\I}$, is inductively defined as:

\begin{itemize}
    \item $\sem{M}{\I} := \I(M)$, for $M\in \Mnam$;
    \item $\sem{e^T}{\I} := \sem{e}{\I}^T$, which is the transpose of $\sem{e}{\I}$;
    \item $\sem{\one(e)}{\I}$ is an $m\times 1$ vector with value $1$ at all positions, where $\sem{e}{\I}$ is an $m \times n$ matrix;
    \item $\sem{\diag(e)}{\I}$ is a matrix with the vector $\sem{e}{\I}$ on its main diagonal, and zero in every other position;
    \item $\sem{e_1\cdot e_2}{\I} := \sem{e_1}{\I} \cdot \sem{e_2}{\I}$;
    \item $\sem{\Apply[f](e_1,\ldots ,e_k)}{\I}$ is a matrix $A$ of the same dimensions as $\sem{e_1}{\I}$, and where $A_{ij}$ has the value $f(\sem{e_1}{\I}_{ij},\ldots ,\sem{e_k}{\I}_{ij})$
\end{itemize}

\subsection{For-MATLANG}
The language for-$\ML$~\cite{geerts_expressive_2021} extends $\ML$ by a construct that we call the \emph{canonical for loop}.
Its syntax is as follows:

\medskip
\begin{tabular}{lc}
    $\ffor{v}{X}{e}$ & (canonical for loop, with $v, X \in \Mvar$)
\end{tabular}
\medskip

The associated typing rule is defined as:

\medskip
\begin{tabular}{lc}
    $\ttypes(\ffor{v}{X}{e}) := \ttypes(e)$, if $\ttypes(e) = \ttypes(X)$ and \\
    $\ttypes(v) = \gamma \times 1$
\end{tabular}
\medskip

Let $b_i^n$ be the vector of dimension $n$ with value $1$ at position $i$ and zero at all other positions.
Such a vector is called \emph{canonical}.
Assuming $\I(v)$ is an $n \times 1$ vector, $\sem{\ffor{v}{X}{e}}{\I}$ is defined iteratively as follows:

\begin{itemize}
    \item Let $A_0$ be the zero matrix of the same size as $\I(X)$.
    \item For $i \in \{ 1,\ldots,n \}$, $A_i:= \sem{e}{\I[v := b_i^n, X:= A_{i-1}]}$.
    \item Finally, $\sem{\ffor{v}{X}{e}}{\I}:= A_{n}$.
\end{itemize}

Intuitively, this loop construct repeatedly updates state variable $X$ by evaluating $e$, where $e$ is allowed to refer to the previous state of the loop as $X$.
Additionally, $e$ may refer to the \emph{canonical vector} $v$.
In the first iteration, $v$ takes on value $b_1^n$, in the second iteration $b_2^n$, etc.

We can define a variant of the above loop construct that initializes $X$ to an arbitrary expression $e_0$ rather than $A_0$.
This variant, denoted $\ffor{v}{X := e_0}{e}$, has the same expressive power as the zero-initialized loop construct\cite{geerts_expressive_2021}.

\section{Adding Simultaneous Induction}\label{sec:sifor-ml}
Many algorithms are naturally expressed using loops that update multiple state variables, also known as \emph{simultaneous induction}.
Simultaneous induction is usually supported in programming languages and in certain query languages~\cite{chandra_programming_1981}.
For example, consider that a typical breadth first search algorithm requires keeping three pieces of state:
\begin{enumerate}
    \item A queue of vertices to be visited;
    \item The vertices that have been visited so far; and
    \item The depth at which each vertex was first visited (or alternatively, a pointer to the parent vertex).
\end{enumerate}

The language for-$\ML$, however, only allows a single matrix to be carried from one iteration of a loop to the next.
The lack of simultaneous induction makes it difficult, if not impossible, to express various commonly used algorithms in for-$\ML$.
We discuss the implications for the expressive power of for-$\ML$ in detail in \cref{sec:si-proof}.

We propose an extension of for-$\ML$, named `sifor-$\ML$', that supports simultaneous induction.
The sifor-$\ML$ language is obtained from for-$\ML$, by modifying the loop definition to allow multiple loop variables:

\medskip
\begin{tabular}{lc}
    $\texttt{for}\, v,\{ M_1 := e_1', \ldots, M_m := e_m' \}(e_1, \ldots, e_m)$
\end{tabular}
\medskip

The typing rule is modified accordingly:

\medskip
\begin{tabular}{lc}
    $\ttypes(\texttt{for}\, v,\{ M_1 := e_1', \ldots, M_m := e_m' \}(e_d, e_1, \ldots, e_m)) := \tau_1$, \\
    if $\tau_i := \ttypes(e_i) = \ttype_{\Sch'}(e_i')$ for all $i \in \{ 1,\ldots,m \}$,                 \\
    where $\Sch' := \Sch[M_1 := \tau_1,\ldots,M_m := \tau_m]$
    and $\ttypes(v) = \gamma \times 1$.
\end{tabular}
\medskip

Assuming that $\I(v)$ is an $n \times 1$ vector, the semantics of the loop are defined iteratively as follows:

\begin{itemize}
    \item Let $\I_0 := \I[v := b^{n}_1, M_1 := \sem{e_1}{\I}, \ldots, M_m := \sem{e_m}{\I}]$
    \item For $i \in \{ 1,\ldots, n \}$, let $\I_i := \I_{i-1}[v := b^{n}_{i+1}, M_1 := \sem{e_1'}{\I_{i-1}}, \ldots, M_m := \sem{e_m'}{\I_{i-1}}]$.
    \item Finally, $\sem{\texttt{for}\, v,\{ M_1 := e_1', \ldots, M_m := e_m' \}(e_1, \ldots, e_m)}{\I}:= \I_n(M_1)$.
\end{itemize}

The loop body defines a number of loop variables $m \geq 1$ that each consist of a \emph{binding variable} $M_i$ and a \emph{body expression} $e_i'$.
Intuitively, before entering the loop, the binding variables are initialized as $M_i := e_i$.
Evaluation of the loop is similar to for-$\ML$, each time updating the binding variables with the results of the body expressions.
The loop produces a single result matrix, which is the final value of $M_1$.

The sifor-$\ML$ language subsumes for-$\ML$.

\begin{lemma}\label{lem:for-to-sifor}
    The loop construct of for-$\ML$ can be simulated using the loop construct of sifor-$\ML$.
\end{lemma}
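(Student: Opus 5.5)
The plan is to translate a canonical for loop $\ffor{v}{X}{e}$ into a sifor-$\ML$ loop with a single loop variable ($m = 1$). The binding variable is $X$ itself, the body is the (translated) $e$, and the initial expression yields the zero matrix of the type of $X$:
\[
  \texttt{for}\, v,\{ X := \hat e \}(Z_X),
\]
where $\hat e$ is the translation of $e$ and $Z_X$ evaluates to the zero matrix of dimensions $\dim(\I(X))$. The translation is applied recursively to all subexpressions, by structural induction on for-$\ML$ expressions. All other $\ML$ constructs map to themselves, so only the loop case needs an argument.

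The first step is constructing $Z_X$. The simplest choice is $\Apply[f](X)$ with $f$ the constant-zero function, assuming it belongs to $\Omega$. If we do not want to assume anything about $\Omega$, I would instead use the fact recalled in the preliminaries that the initialized variant $\ffor{v}{X := e_0}{e}$ has the same expressive power as the zero-initialized loop~\cite{geerts_expressive_2021}. In that case it suffices to simulate the initialized variant, whose initial expression $e_0$ is placed directly in the initializer slot.

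The second step is typing. The for-$\ML$ rule requires $\ttypes(e) = \ttypes(X) = \tau$ and $\ttypes(v) = \gamma \times 1$. Since $\ttypes(Z_X) = \tau$, setting $\Sch' = \Sch[X := \tau] = \Sch$ gives $\ttype_{\Sch'}(\hat e) = \tau$, so the side condition of the sifor rule holds and the result type $\tau_1 = \tau$ matches.

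The third step, and the only real subtlety, is matching the semantics, because the indices of $v$ are shifted in the sifor definition. We have $\I_0$ with $v = b^n_1$, and $\I_i$ sets $v := b^n_{i+1}$ after evaluating the body under $\I_{i-1}$. I would prove by induction on $i \in \{0,\ldots,n\}$ that
\begin{itemize}
  \item $\I_i(X) = A_i$,
  \item $\I_i(v) = b^n_{i+1}$ for $i < n$, and
  \item $\I_i$ agrees with $\I$ on every other variable.
\end{itemize}
The key equality for the inductive step is
\[
  \sem{\hat e}{\I_{i-1}} = \sem{e}{\I[v := b^n_i, X := A_{i-1}]} = A_i,
\]
which uses the structural induction hypothesis for $e$. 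The value $b^n_{n+1}$ assigned at the final step is never used in an evaluation, so it is harmless; at most I would note that it is ill-defined and irrelevant. Finally, $\I_n(X) = A_n$, which is exactly the for-$\ML$ result.
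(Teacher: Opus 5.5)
Your proposal is correct and is essentially the paper's argument: the paper proves the lemma with the single rewrite $\ffor{v}{X := e_0}{e} \Rightarrow \texttt{for}\, v,\{ X := e \}(e_0)$, i.e.\ a one-binding sifor-$\ML$ loop with the same body and the initializer in the initial slot, which is exactly your fallback route. Your $\Apply[0](X)$ initializer is just the instance $e_0 = \Apply[0](X)$ that covers the zero-initialized loop, and your typing check and iteration-index induction (including the harmless undefined $b^n_{n+1}$) fill in details the paper leaves implicit.
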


The above lemma follows directly from the following rewrite rule:

\[
    \ffor{v}{X := e_0}{e} \Rightarrow \texttt{for}\, v,\{ X := e \}(e_0)
\]

\subsection{Does sifor-$\ML$ Strictly Subsume for-$\ML$?}\label{sec:si-proof}
Adding simultaneous induction to a language does not always increase its expressive power.
In first-order logic, for example, simultaneous induction can be simulated with induction on a single predicate~\cite[Chapter 8.2]{ebbinghaus_finite_1995}.
In this subsection we provide evidence to support the hypothesis that sifor-$\ML$ indeed has greater expressive power than for-$\ML$.
We consider possible strategies for simulating simultaneous induction with a single loop variable, and show that none of them are applicable to for-$\ML$.
We use the following simple recurrence as an example, which assumes two arbitrary input matrices $A$ and $B$:

\begin{align*}
    C_0 & = A                     \\
    C_1 & = B                     \\
    C_i & = C_{i-2} \cdot C_{i-1} \\
\end{align*}

Encoding this recurrence in for-$\ML$ would require the simulation of simultaneous induction, because the value of $C_i$ relies on the value of \emph{two} previous loop states $C_{i-2}$ and $C_{i-1}$.
For such a simulation to work, $C_{i-2}$ and $C_{i-1}$ must be encoded into a single matrix without loss of information.
Below we consider different strategies for combining two matrices $A$ and $B$ into one, and show that none of them can be applied to for-$\ML$.

\paragraph*{Additional Attributes.}
In relational algebra, relations can have arbitrary arity.
For the recurrence above, matrices $A$ and $B$ can be stored in relations with attributes $(r, c, a)$ and $(r, c, b)$, respectively.
To combine the two, we can simply create a table with four attributes $(r, c, a, b)$.
In relational algebra, simultaneous induction therefore does not increase expressivity.
Given that for-$\ML$ has the same expressivity as a particular fragment of relational algebra~\cite{brijder_matrices_2022}, one might assume that the same holds in for-$\ML$.
Crucially, however, in for-$\ML$ the matrix type limits us to arity 3~\cite{brijder_matrices_2022}, so the strategy of using additional attributes does not carry over to for-$\ML$.

\paragraph*{Matrix Concatenation.}
One might wonder whether in for-$\ML$ two $m \times n$ matrices can be concatenated to form a single $2m \times n$ (or $m \times 2n$) matrix.
However, given the type rules of for-$\ML$ it is apparent that such a matrix cannot be constructed.
Consider a schema $\Sch$ with a mapping for two matrix variables $\Sch[A := \alpha \times \beta]$ and $\Sch[B := \gamma \times \delta]$.
Then the only possible values of $\ttypes(e)$ for arbitrary $e$ are combinations of $\alpha, \beta, \gamma, \delta$ and 1.
Matrices must be constructed in terms of the dimensions of the inputs, and therefore concatenation of arbitrary matrices is impossible.

\paragraph*{Pairing.}
Two matrices $A$ and $B$ can be encoded in a single matrix using a pairing function $\pi : R \times R \to R$ that maps each pair of elements $(A_{ij},B_{ij})$ into a distinct value.
If $\pi$ is an injective mapping, then applying $\pi$ to matrices $A$ and $B$ gives a single matrix $C$ from which both $A$ and $B$ can be recovered using the retraction of $\pi$.
For finite semirings, however, it is easy to show that $\pi$ cannot be injective.
Let $R$ be a finite semiring with cardinality $|R|$.
Then there are $|R|^2$ distinct values in the domain of $\pi$, but only $|R|$ in the co-domain, which is insufficient to map each pair $(A_{ij},B_{ij})$ to a distinct value.
The implications of this are especially damning for a software implementation of for-$\ML$, which would need to operate over fixed-size data types for performance.

\bigskip
\noindent
None of the strategies described above succeed in simulating simultaneous induction in for-$\ML$.
This supports the hypothesis that because of its support for simultaneous induction, the expressive power of sifor-$\ML$ is indeed greater than for-$\ML$.

\section{Decomposing the Loop Construct}\label{sec:dec-ml}
The loop construct in sifor-$\ML$ can be replaced by two natural constructs:
A simpler loop definition without canonical vectors, as has been previously proposed in the context of $\ML$~\cite{geerts_matrix_2021} and query languages in general~\cite{chandra_programming_1981},
and an additional operation $\pickAny$, a type of aggregation in relational algebra, that is independently useful for tasks such as leader election.
This decomposition also has further advantages, which we discuss at length in \cref{sec:dec-advantages}.

Before giving the loop definition, we first define the typing rule and semantics of the $\pickAny$ construct:

\medskip
\begin{tabular}{lc}
    $\ttypes(\pickAny(e)) := \ttypes(e)$
\end{tabular}
\medskip

The expression $\pickAny(e)$ keeps the non-zero element with minimal column index of each row in $e$, setting all other entries to zero:

\medskip
\begin{tabular}{lc}
    $\sem{\pickAny(e)}{\I}$ is a matrix $A$ of the same size as $\sem{e}{\I}$, and where $A_{ij}$ \\
    is $\sem{e}{\I}_{ij}$ if $\sem{e}{\I}_{ik} = 0$ for all $k < j$, and $0$ otherwise.
\end{tabular}
\medskip

For the updated loop construct, we drop the canonical vector $v$, and instead use an input expression $e_d$ to determine the number of iterations that the loop should run:

\medskip
\begin{tabular}{lc}
    $\texttt{for}\, \{ M_1 := e_1', \ldots, M_m := e_m' \}(e_d, e_1, \ldots, e_m)$
\end{tabular}
\medskip

The updated typing rule is as follows:

\medskip
\begin{tabular}{lc}
    $\ttypes(\texttt{for}\, \{ M_1 := e_1', \ldots, M_m := e_m' \}(e_d, e_1, \ldots, e_m)) := \tau_1$, \\
    if $\tau_i := \ttypes(e_i) = \ttype_{\Sch'}(e_i')$ for all $i \in \{ 1,\ldots,m \}$,               \\
    where $\Sch' := \Sch[M_1 := \tau_1,\ldots,M_m := \tau_m]$
    and $\ttypes(e_d) = \gamma \times 1$.
\end{tabular}
\medskip

Assume that $\sem{e_d}{\I}$ is an $n \times 1$ vector.
The semantics are almost identical, except that $\I_i$ does not include a matrix variable mapping to a canonical vector:
\begin{itemize}
    \item Let $\I_0 := \I[M_1 := \sem{e_1}{\I}, \ldots, M_m := \sem{e_m}{\I}]$
    \item For $i = \{ 1,\ldots n \}$, let $\I_i := \I_{i-1}[M_1 := \sem{e_1'}{\I_{i-1}}, \ldots, M_m := \sem{e_m'}{\I_{i-1}}]$.
    \item Finally, $\sem{\texttt{for}\, \{ M_1 := e_1', \ldots, M_m := e_m' \}(e_d, e_1, \ldots, e_m)}{\I}:= \I_n(M_1)$.
\end{itemize}

We denote the language obtained from sifor-$\ML$ by simplifying the loop construct and adding $\pickAny$ as dec-$\ML$.
It turns out that these changes do not affect the expressive power of the language.

\begin{theorem}\label{thm:sifor-eq-dec}
    Languages sifor-$\ML$ and dec-$\ML$ have equal expressive power.
\end{theorem}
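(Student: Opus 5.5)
We prove the two inclusions separately, each by a syntax-directed translation that is applied inductively to subexpressions. Nested loops are therefore handled by translating inner loops first. Throughout we assume, as is customary for $\ML$, that $\Omega$ contains a few basic pointwise functions: the ring operations, the zero-test $\mathrm{nz}(x)$ ($1$ if $x\neq 0$, else $0$), and $g(x,y) := x$ if $y = 0$ and $0$ otherwise. The main technical risk is exactly this dependence on $\Omega$. If the paper intends $\Omega$ to be arbitrary, I would first check whether these functions can be replaced by semiring operations combined with $\pickAny$ and matrix products.

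\emph{From sifor-$\ML$ to dec-$\ML$.} Take a loop $\texttt{for}\, v,\{M_1 := e_1',\ldots,M_m := e_m'\}(e_1,\ldots,e_m)$ with $\ttypes(v)=\gamma\times 1$. We eliminate the canonical vector by carrying it, and the set of positions not yet used, as two extra state variables $V$ and $U$ of type $\gamma\times 1$. The translation is the dec-$\ML$ loop
\[
\texttt{for}\,\{M_i := e_i'[v/V],\ V := \transp{\pickAny(\transp{(\Apply[g](U,V))})},\ U := \Apply[g](U,V)\}(v, e_1,\ldots,e_m, b, \one(v)),
\]
where the initial canonical vector is $b := \transp{\pickAny(\transp{\one(v)})}$. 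The input $e_d := v$ only fixes the iteration count $n$. The argument is an induction on $i$ showing that after $i$ iterations $V = b^n_{i+1}$ and $U$ is the all-ones vector with positions $1..i$ zeroed. The key observation is that $\pickAny$ on the row vector $\transp{U}$ selects its first remaining $1$; at $i=n$ it yields a zero vector, which is never used. Consequently, $M_1,\ldots,M_m$ evolve exactly as in the sifor semantics, since the simultaneous update reads the old $V$, matching $\I_{i-1}(v)=b^n_i$. Reordering so that $M_1$ is first preserves the result.

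\emph{From dec-$\ML$ to sifor-$\ML$.} This direction requires simulating both the decomposed loop and $\pickAny$. For a loop $\texttt{for}\,\{M_i := e_i'\}(e_d,e_1,\ldots,e_m)$ with $\ttypes(e_d)=\gamma\times 1$, we take a fresh variable $v$ of type $\gamma\times 1$ and form $\texttt{for}\, v,\{M_i := e_i'\}(e_1,\ldots,e_m)$. The body ignores $v$, so the loop runs $\Idim(\gamma)=n$ times with identical updates. For $\pickAny(e)$, with $e$ of type $\alpha\times\beta$, we loop over columns with a canonical vector $v$ of type $\beta\times 1$. The state consists of a result $R$ ($\alpha\times\beta$, initially $e$ scaled to zero, e.g.\ $\Apply[0](e)$) and a row-flag vector $F$ ($\alpha\times 1$, initially zero). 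Writing $c := e\cdot v$ for the current column and $h := \Apply[g](\Apply[\mathrm{nz}](c), F)$ for the rows that have a nonzero here and none before, the updates are
\[
R := R + \diag(h)\cdot c\cdot \transp{v},\qquad F := \Apply[\mathrm{nz}](F + c).
\]
An induction over columns shows that $R$ then equals $\pickAny(e)$ restricted to the first $j$ columns.

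The step I expect to be most delicate is the sifor-to-dec direction. There the off-by-one semantics of $\I_i$ (which assigns $b^n_{i+1}$, undefined at $i=n$) must be matched carefully. We also need to ensure that the extra state variables are well-typed under the dec-$\ML$ rule. This holds, since $v$ itself supplies the size symbol $\gamma$.
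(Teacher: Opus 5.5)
Your overall architecture matches the paper's. In the sifor-to-dec direction you carry the current canonical vector and a mask of unused positions as two extra loop variables and advance them with $\pickAny$; the paper uses subtraction, $V - \transp{v}$, where you use $g$. For dec-to-sifor loops you introduce a canonical variable of the type of $e_d$ that the body ignores. Both parts are fine. Your reliance on helper functions in $\Omega$ is no stronger than the paper's, which itself assumes a custom three-argument function and a subtraction operator.

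The problem is in your simulation of $\pickAny$. The flag update $F := \Apply[\mathrm{nz}](F + c)$ adds the \emph{raw} column values to the flags. Over a ring with additive inverses, such as the real or complex numbers usually used for $\ML$, this sum can cancel. Take the single row $e = [1\ \ {-1}\ \ 5]$:
\begin{itemize}
\item After column $1$ you get $R = [1\ \ 0\ \ 0]$ and $F = 1$.
\item At column $2$ you correctly get $h = 0$, but $F := \mathrm{nz}(1 + (-1)) = 0$ resets the flag.
\item At column $3$ you therefore get $h = 1$ and $R = [1\ \ 0\ \ 5]$, whereas $\pickAny(e) = [1\ \ 0\ \ 0]$.
\end{itemize}
So the invariant your column induction needs (``$F_r \neq 0$ iff row $r$ has a nonzero among the first $j$ columns'') is not preserved. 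The construction as written is only correct in zero-sum-free semirings such as $\ringBool$ or the tropical ones.

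The repair is simple: update the flag from $h$ rather than from $c$, e.g.\ $F := F + h$. Since $h_r$ can be $1$ only when $F_r = 0$, $F$ stays a $0/1$ vector in every semiring and becomes $1$ exactly at the first nonzero, after which your induction goes through. With that fix, your single loop over columns with a row-flag vector is a leaner alternative to the paper's construction. The paper nests a loop over rows around a loop over columns. It detects an already-chosen entry by summing the current row's partial result $Y$ against an all-ones matrix, and that sum cannot cancel precisely because $Y$ never holds more than one nonzero entry.
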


\begin{proof}
    We give constructions showing that:
    (1) The loop construct of sifor-$\ML$ can be expressed in dec-$\ML$;
    (2) The loop construct of dec-$\ML$ can be expressed in sifor-$\ML$; and
    (3) The $\pickAny$ construct of dec-$\ML$ is expressible in sifor-$\ML$.

    \paragraph*{Simulating sifor-$\ML$ loops in dec-$\ML$.}
    The rewrite rule below shows how loops with canonical vectors as they appear in sifor-$\ML$ can be expressed in dec-$\ML$.

    \begin{tabbing}
        L0\=L1\=L2\=L3\=L4 \kill
        $\texttt{for}\, v,\{ M_1 := e_1', \ldots, M_m := e_m' \}(e_1, \ldots, e_m) \Rightarrow$ \\
        $\texttt{for}\, \{$ \\
        \>$M_1 := e_1', \ldots, M_m := e_m'$\\
        \>$v := \transp{\pickAny(V - \transp{v})}$ \\
        \>$V := V - \transp{v}$ \\
        $\}($\\
        \>$v,$\\
        \>$e_1, \ldots, e_m,$ \\
        \>$\transp{\pickAny(\transp{\one(v)})},$\\
        \>$\transp{\one(v)})$
    \end{tabbing}

    The rewrite introduces two additional loop variables $v$ and $V$, where
    $v$ stores the current canonical vector, while $V$ tracks which canonical vectors have been observed in previous iterations (corresponding to the zero values).
    Variable $V$ must not appear in $\{ M_1,\ldots,M_m \}$.
    Because $\Mvar$ is countably infinite, we can always find such a $V$.
    The input expressions to the loop initialize these as $v := \transp{\pickAny(\transp{\one(v)})} = \transp{[1\ 0 \cdots 0]}$ and $V := \transp{\one(v)} = [1 \cdots 1]$.
    After the first iteration, the variables are updated to $v := \transp{\pickAny(V - \transp{v})} = \transp{[0 \ 1\ 0 \cdots 0]}$ and $V := V - \transp{v} = [0\ 1 \cdots 1]$, and so on.

    \paragraph*{Simulating dec-$\ML$ loops in sifor-$\ML$.}
    dec-$\ML$ loops are a simplification of sifor-$\ML$ loops, which makes rewriting them to sifor-$\ML$ trivial:

    \begin{align*}
        \texttt{for}\ \ \  & \{ M_1 := e_1', \ldots, M_m := e_m' \}(v, e_1, \ldots, e_m)
        \Rightarrow                                                                         \\
        \texttt{for}\, v,  & \{ M_1 := e_1', \ldots, M_m := e_m' \}(\ \ \ e_1, \ldots, e_m)
    \end{align*}

    \paragraph*{Simulating $\pickAny$ in sifor-$\ML$.}
    $\pickAny(A)$ can be simulated using the for loop with canonical vectors of for-$\ML$ (and by extension also in sifor-$\ML$).
    Let $f \in \Omega$ be the pointwise function defined as:

    \[
        f(y, d, p) = \begin{cases}
            p & d = 0    \\
            y & d \neq 0
        \end{cases}
    \]

    Then $\pickAny(A)$ can be simulated as:

    \begin{tabbing}
        L0\=L1\=L2\=L3\=L4 \kill
        $\pickAny(A) \Rightarrow$ \\
        $\SyntaxStyle{let}\ v = \one(A) \ \SyntaxStyle{in}$ \\
        $\SyntaxStyle{let}\ w = \one(\transp{A}) \ \SyntaxStyle{in}$ \\
        $\SyntaxStyle{let}\ X = A \ \SyntaxStyle{in}$ \\
        $\SyntaxStyle{let}\ Y = \transp{w} \ \SyntaxStyle{in}$ \\
        $\SyntaxStyle{let}\ B = w \cdot \transp{w} \ \SyntaxStyle{in}$ \\
        $\texttt{for}\, v,X \texttt{.}\,$\\
        \>$\SyntaxStyle{let}\ R = \transp{v} \cdot A \ \SyntaxStyle{in}$ \\
        \>$\SyntaxStyle{let}\ F = \texttt{for}\, w,Y \texttt{.}\,$\\
        \>\>$\SyntaxStyle{let}\ D=Y \cdot B\ \SyntaxStyle{in}$ \\
        \>\>$\SyntaxStyle{let}\ P=R \cdot \diag(w)\ \SyntaxStyle{in}$ \\
        \>\>$\Apply[f](Y, D, P)$ \\
        \>$X + v \cdot F$
    \end{tabbing}

    Initial values are assigned to $v$, $w$, $X$, $Y$ only to define their dimensions, as this is needed for type checking the \texttt{for} loops.
    The outer loop $\texttt{for}\, v,X$ iterates over the input rows, while $\texttt{for}\, w,Y$ iterates over the columns.
    $Y$ contains the (partial) result for input row $R$.
    In the inner loop, we first detect ($D$) if we already have a value for the current row.
    This uses the matrix $B$, which broadcasts any nonzero value in $Y$ to all positions in $D$.
    Then, we take the proposed ($P$) value at position $(v, w)$ in $A$, setting all other values to zero.
    Finally, the $\Apply$ either picks the proposed value $p$ if there is no value yet for the current row, or the existing value $y$ otherwise.

    \paragraph*{Other Language Constructs.}
    All other language constructs from sifor-$\ML$ and dec-$\ML$ exist in both languages with the same semantics.
    Combining this with the rewrite rules derived above, we conclude that any well-typed sifor-$\ML$ expression can be expressed in dec-$\ML$ and vice versa.
\end{proof}

\subsection{Advantages of Loop Decomposition}\label{sec:dec-advantages}
A first advantage of the decomposition of the sifor-$\ML$ loop construct is that it allows studying the individual contributions of the loop construct and of $\pickAny$.
Both the loop construct and the $\pickAny$ operation are known from the literature.
The fragment of for-$\ML$ that omits canonical vectors was first proposed in \cite{geerts_matrix_2021}, and a similar loop construct has been defined for relational algebra~\cite{chandra_programming_1981}.
The $\pickAny$ construct can be naturally expressed in an extension of relational algebra that includes aggregation.
Let $A$ be an adjacency matrix, and $E$ the relation with attributes \texttt{(row, col, value)} that describes matrix $A$.
Assuming an aggregate function \texttt{ARGMIN(a, b)} that returns the value \texttt{a} for which \texttt{b} is minimal, the expression $\pickAny(A)$ corresponds to the following SQL query:
\begin{verbatim}
SELECT row, MIN(col), ARGMIN(value, col) 
FROM E
WHERE value <> 0
GROUP BY row
\end{verbatim}
Considering that the simulation of $\pickAny$ in sifor-$\ML$ requires complex constructs such as nested for loops, a translation from a simulated $\pickAny$ expression in sifor-$\ML$ to efficient relational algebra appears much more challenging.

The $\pickAny$ operation has uses that go beyond the construction of canonical vectors.
Consider the problem of finding weakly connected components in an undirected graph $G$.
Recall that the weakly connected components of $G$ partition the graph into disjoint subgraphs where the vertices within each subgraph are connected.
If we pick a \emph{representative} vertex for each connected component, a solution can be encoded as a function $L: V \to V$ that maps each vertex $v$ in the graph to the representative vertex of the connected component of $v$, where $V$ is the set of vertices of $G$.
Assuming an adjacency matrix $A$ that describes $G$, an algorithm for finding the weakly connected components is concisely expressed in dec-$\ML$ as:

\[
    \mathsf{WCC}(A) = \texttt{for}\ \{ X := \pickAny(X + (A \cdot X)) \}(\diag(\one(A)))
\]

Where $L(u) = v$ if $\mathsf{WCC}(A)_{uv} = 1$.
The intuition behind this algorithm is as follows: The initial value of $X$ is an identity matrix (written as $\diag(\one(A))$), representing the initial assumption that every vertex is in a distinct connected component.
Inside the loop, $A \cdot X$ is a one-hop graph traversal step that for each vertex produces the representative vertices of its direct neighbors.
The expression $\pickAny(X + (A \cdot X))$ states that for every vertex (i.e., every row), we pick the lowest label (i.e., the lowest column position) that appears among the immediate neighborhood ($A \cdot X$) or the current label ($X$).
Performing this repeatedly in a loop, the algorithm converges to a labeling $X$ in at most $|V|$ steps.
The definition of $\mathsf{WCC}$ facilitates a straightforward translation into an extension of relational algebra that can be efficiently executed~\cite{graaf_algorithm_2026}.

\section{Semirings}\label{sec:muse-ml}
$\ML$ expressions operate over a fixed semiring, which in prior work has been the complex~\cite{brijder_expressive_2019} or real numbers~\cite{geerts_expressive_2021}.
If instead we also allow other semirings to be used, thereby changing the semantics of matrix multiplication, we can elegantly express various useful algorithms.

\begin{example}
    Assuming the boolean semiring, the following algorithm computes the vertices reachable from a start vertex $s$ in a graph with adjacency matrix $A$,
    where $S$ is a column vector with value $1$ at $S_s$ and $0$ at all other positions.

    \[
        \mathsf{reach}(S, A) = {\normalfont \texttt{for}}\ \{ R := R + \transp{(\transp{v} \cdot A)}\}(S, S)
    \]

    Vector $R$ has value $1$ at $R_i$ if vertex $i$ is reachable from $s$.
    The algorithm can be modified to compute the shortest paths from $s$ to all other vertices in the graph, simply by switching from the boolean to the tropical min-plus semiring~\cite{kepner_graph_2011}.
    In this case, $R_i$ is the shortest distance from $s$ to $i$, or $\infty$ if there is no path between $s$ and $i$.
    Adjacency matrix $A$ contains the edge weights (with $\infty$ at $A_{uv}$ if there is no edge from $u$ to $v$).
    Vector $S$ has value $0$ at $S_s$ and $\infty$ at all other positions.
\end{example}

The example above motivates an extension of dec-$\ML$ that allows using different semirings for different algorithms.
Moreover, it is not difficult to imagine algorithms that use multiple semirings.
The label propagation algorithm we implement in \cite{graaf_algorithm_2026} defines its inputs and outputs over the integer semiring, but internally performs some operations over a tropical semiring.
We show one such case in the example below.

\begin{example}
    Consider a vector $V$ in the integer semiring from which we want to extract the maximum value.
    This is difficult to express in $\ML$ and its extensions discussed so far.
    Summing over all values of $V$, however, is trivial because it uses the natural addition operator of the integer semiring:

    \[
        \mathsf{sum}(V) = \transp{\one(V)} \cdot V
    \]

    If we permit \emph{casting} values to different semirings, we can leverage that to implement the $\mathsf{max}$ function.
    Let $\cast_{R \to T}$ be a family of functions that change the semiring $R$ of the input value to semiring $T$.
    Then the maximum value of a vector $V$ can be expressed as:

    \[
        \mathsf{max}(V) = \mathsf{sum}(\cast_{\ringInt \to \ringTropMaxInt}(V))
    \]

    Vector $V$ is cast from the integer to the tropical max-plus semiring, which uses the $\max$ operator for addition.
    Because of the semiring cast, the $\mathsf{sum}$ function computes the maximum instead of the sum.
\end{example}

\subsection{Adding Multiple Semiring Support to $\ML$}
Casting does not need to be added as a new language construct.
Instead, we generalize the $\Apply$ construct to allow pointwise functions with different semirings for inputs and outputs.
The hypothetical cast operation $\cast_{R \to T}(e)$ from the example above is then expressed as $\Apply[\cast_{R \to T}](e)$.
This approach does, however, require changing the data model to associate a semiring with each expression and matrix value.

Let muse-$\ML$ be the language obtained from dec-$\ML$ with the following modifications.
We assume a nonempty finite set $\texttt{SRing}$ of semirings.
In muse-$\ML$ a type is an ordered triple $s_1 \times s_2 \times r$, where $s_1$ and $s_2$ are size terms as in $\ML$, and $r \in \texttt{SRing}$ is a semiring.
To allow pointwise functions that cast between semirings, we define repertoire $\Omega$ to be a set of functions $f : R_1 \times \cdots \times R_k \to R_o$, where $R_1,\ldots,R_k$ and $R_o$ are all (potentially distinct) semirings from $\texttt{SRing}$.
The updated typing rules are presented below:

\begin{itemize}
    \item $\ttypes(M):= \Sch(M)$, for a matrix variable $M\in\Mnam$;
    \item $\ttypes(\transp{e}):= \beta \times \alpha \times R$ if $\ttypes(e)=\alpha \times \beta \times R$;
    \item $\ttypes(\one(e)):= \alpha \times 1 \times R$ if $\ttypes(e)=\alpha \times \beta \times R$;
    \item $\ttypes(\diag(e)):= \alpha \times \alpha \times R$, if $\ttypes(e)=\alpha \times 1 \times R$;
    \item $\ttypes(e_1 \cdot e_2):= \alpha \times \gamma \times R$ if  $\ttypes(e_1)=\alpha \times \beta \times R$, and $\ttypes(e_2)=\beta \times \gamma \times R$;
    \item $\ttypes(\Apply[f](e_1,\ldots ,e_k)):= \alpha \times \beta \times R_o$, if $\ttypes(e_i) = \alpha \times \beta \times R_i$ for all $i \in [1 \ldots k]$ and $f : R_1 \times \cdots \times R_k \to R_o$ is in $\Omega$;
    \item $\ttypes(\pickAny(e)) := \alpha \times \beta \times R$ if $\ttypes(e) = \alpha \times \beta \times R$;
    \item $\ttypes(\texttt{for}\, \{ M_1 := e_1', \ldots, M_m := e_m' \}(e_d, e_1, \ldots, e_m)) := \tau_1$,\\
          if $\tau_i := \ttypes(e_i) = \ttype_{\Sch'}(e_i')$ for all $i \in \{ 1,\ldots,m \}$,\\
          where $\Sch' := \Sch[M_1 := \tau_1,\ldots,M_m := \tau_m]$
          and $\ttypes(e_d) = \gamma \times 1 \times R_d$.
\end{itemize}

The semantics of muse-$\ML$ expressions are as expected.
For example, the property that $\sem{e_1 \cdot e_2}{\I} = \sem{e_1}{\I} \cdot \sem{e_2}{\I}$ is maintained.
The only difference is that where in dec-$\ML$ there is one semiring $R$ that defines the addition and multiplication operator (and the additive and multiplicative identities), in muse-$\ML$ we use the semiring associated with the input.

\subsection{Semiring Casting Does Not Add Expressivity}
Perhaps surprisingly, muse-$\ML$ has the same expressive power as dec-$\ML$.

Before proving this, we first define the following helper functions.
We use for-$\ML$ and sifor-$\ML$ style loops with canonical vectors where this aids readability, noting that we have previously shown in \cref{lem:for-to-sifor,thm:sifor-eq-dec} that such loops can be rewritten into dec-$\ML$.
Function $\mathsf{emax}(V)$ follows the definition given in \cite{geerts_expressive_2021}, returning the last canonical vector:

\begin{tabbing}
    L0\=L1\=L2\=L3\=L4 \kill
    $\mathsf{emax}(V) = $\\
    $\SyntaxStyle{let}\ v = V \ \SyntaxStyle{in}$ \\
    $\SyntaxStyle{let}\ X = V \ \SyntaxStyle{in}$ \\
    $\texttt{for}\, v,X \texttt{.}\, v$
\end{tabbing}

The converse $\mathsf{emin}$, which returns the first canonical vector, can be defined as $\mathsf{emin}(V) = \transp{\pickAny(\transp{\one(V)})}$.
We also take the definition of ordering matrix $S_{\leq}$ from \cite{geerts_expressive_2021}:

\[
    \begin{bmatrix}
        1      & 1      & \cdots & 1 \\
        0      & 1      & \cdots & 1 \\
        \vdots & \vdots & \ddots & 1 \\
        0      & 0      & \cdots & 1
    \end{bmatrix}
\]

This matrix is obtained by evaluating the following function over a column vector $V$:

\begin{tabbing}
    L0\=L1\=L2\=L3\=L4 \kill
    $S_{\leq}(V) = $\\
    $\SyntaxStyle{let}\ v = V \ \SyntaxStyle{in}$ \\
    $\SyntaxStyle{let}\ X = \diag(V) \ \SyntaxStyle{in}$ \\
    $(\texttt{for}\, v,X \texttt{.}$\\
    \>$X
        + (X \cdot \mathsf{emax}(V) + v) \cdot \transp{v}
        + v \cdot \transp{\mathsf{emax}(V)}
        ) - \one(V) \cdot \transp{\mathsf{emax}(V)}$
\end{tabbing}

Note that $S_{\leq}(V)$ evaluates to a square matrix whose the dimensions match the number of rows in $V$.
The final subtraction term $- \one(V) \cdot \transp{\mathsf{emax}(V)}$ is missing in \cite{geerts_expressive_2021}, but is required to avoid some elements of the final matrix having value $2$ rather than the expected $1$.

By subtracting the identity matrix, we obtain $S_{<}(V) = S_{\leq}(V) - \diag(\one(V))$.
Based on $S_{<}(V)$, we can define $\mathsf{rotate}(V)$:

\[
    \mathsf{rotate(V)} = \pickAny(S_{<}(V)) + \mathsf{emax}(V) \cdot \transp{\mathsf{emin}(V)}
\]

The expression $\mathsf{rotate}(V)$ evaluates to:

\[
    \begin{bmatrix}
        0      & 1      & 0      & \cdots & 0      \\
        \vdots & \ddots & \ddots & \ddots & \vdots \\
        \vdots &        & \ddots & \ddots & 0      \\
        0      &        &        & \ddots & 1      \\
        1      & 0      & \cdots & \cdots & 0      \\
    \end{bmatrix}
\]

With the same dimensions as $S_{\leq}(V)$.
It is a permutation matrix with the property that when it is multiplied with a vector, it cyclically rotates all entries by one position up.
For example, given $V = \transp{[1\ 2\ 3]}$, we have $\mathsf{rotate}(V) \cdot V = \transp{[2\ 3\ 1]}$.

Function $\mathsf{sum}[\underline{0}, \oplus](V)$ generalizes summation of (column) vector $V$ to a scalar value.
Parameters are the additive identity $\underline{0}$ and addition operator $\oplus$.

\begin{tabbing}
    L0\=L1\=L2\=L3\=L4 \kill
    $\mathsf{sum}[\underline{0}, \oplus](V) = $\\
    $\SyntaxStyle{let}\ R = \mathsf{rotate}(V) \ \SyntaxStyle{in}$ \\
    $\SyntaxStyle{let}\ X =  \texttt{for}\, \{$\\
    \>\>$X:= \Apply[\oplus](V, R \cdot X)$ \\
    \>$\}(V, \Apply[\underline{0}](V))$\\
    $\SyntaxStyle{in}\ \transp{\mathsf{emax}(V)} \cdot X$
\end{tabbing}

The expression $\mathsf{sum}[\underline{0}, \oplus](V)$ evaluates to $\underline{0} \oplus V_1 \oplus \cdots \oplus V_n$, where $V$ is an $n \times 1$ vector.
Repeated multiplication with permutation matrix $R$ followed by element-wise addition ensure that all entries of $X$ contain the sum of values in the vector, i.e., given $V = \transp{[1\ 2\ 3]}$ we have $X = \transp{[6\ 6\ 6]}$.
A final multiplication with $\transp{\mathsf{emax}(V)}$ extracts the last element of $X$ (though since all values are equal, $\transp{\mathsf{emin}(V)}$ would also work).

We are now ready to present the proof that muse-$\ML$ has the same expressive power as dec-$\ML$.
Recall that for any two sets $X$ and $Y$, if $X$ has cardinality no greater than $Y$ ($|X| \leq |Y|$), then there is an injective function $f$ from $X$ to $Y$.
The \emph{retraction} $g$ of $f$ is a function from $Y$ to $X$ such that for all $x \in X$ we have $g(f(x)) = x$.
Note that if $f$ is injective, then a retraction $g$ exists.

\begin{theorem}\label{thm:muse-eq-dec}
    Let $e$ be a well-typed muse-$\ML$ expression, and let $\texttt{SRing}$ be the set of semirings used by $e$.
    Let $T$ be a semiring in $\texttt{SRing}$ such that for all $R \in \texttt{SRing}$ we have $|R| \leq |T|$.
    Then there exists an expression in dec-$\ML$ defined over $T$ that is equivalent to $e$.
\end{theorem}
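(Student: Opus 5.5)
The plan is to encode every semiring $R \in \texttt{SRing}$ into $T$ via an injection $\iota_R : R \to T$, which exists because $|R| \leq |T|$, and to fix a retraction $\rho_R : T \to R$ with $\rho_R(\iota_R(x)) = x$. For $T$ itself we take $\iota_T$ to be the identity. We then translate $e$ by structural induction into a dec-$\ML$ expression $\encoded{e}$ over $T$, and prove the invariant
\[
\sem{\encoded{e}}{\encoded{\I}} = \iota_R(\sem{e}{\I})
\]
entrywise, where $R$ is the semiring component of $\ttypes(e)$ and $\encoded{\I}$ applies $\iota$ to every input matrix. Since $\Omega$ is an arbitrary repertoire, we may add to it any pointwise functions over $T$ that we need.

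The easy cases come first.
\begin{itemize}
\item Variables, transpose, $\one$ and $\diag$ are translated homomorphically. For $\one(e)$ and $\diag(e)$ we must fix the constants, since $\one$ produces $1_T$ rather than $\iota_R(1_R)$, and $\diag$ fills off-diagonal entries with $0_T$ rather than $\iota_R(0_R)$. For $\one$ we apply the constant function $x \mapsto \iota_R(1_R)$. For $\diag(e)$ we use
\[
\Apply[g](\diag(\one(\encoded{e})), \diag(\encoded{e})),
\]
where $g(d,x)=x$ if $d = 1_T$ and $\iota_R(0_R)$ otherwise. This works because $\diag(\one(\cdot))$ marks the diagonal independently of the values.
\item $\Apply[f](e_1,\ldots,e_k)$ with $f : R_1\times\cdots\times R_k \to R_o$ becomes $\Apply[\hat f](\encoded{e_1},\ldots,\encoded{e_k})$, where
\[
\hat f(t_1,\ldots,t_k) = \iota_{R_o}(f(\rho_{R_1}(t_1),\ldots,\rho_{R_k}(t_k))).
\]
\item The loop translates componentwise, with the body and initial expressions translated and $e_d$ translated as is, since only its dimension matters. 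The invariant is preserved because each iteration is preserved.
\item $\pickAny$ needs care, because "nonzero" now means $\neq \iota_R(0_R)$ rather than $\neq 0_T$. We first apply $h(x) = 0_T$ if $x = \iota_R(0_R)$ and $x$ otherwise, then apply $\pickAny$, then apply $h^{-1}$, which maps $0_T$ back to $\iota_R(0_R)$. This is sound as long as $\iota_R(0_R) = 0_T$ or $0_T \notin \iota_R(R)$. A cleaner route is to choose $\iota_R$ with $\iota_R(0_R) = 0_T$ from the start, which is always possible by composing with a transposition of $T$. So we fix $\iota_R(0_R)=0_T$ and $\iota_R(1_R)=1_T$ (when $|R| \ge 2$; trivial semirings are handled directly). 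With this choice $\diag$, $\one$ and $\pickAny$ translate verbatim.
\end{itemize}

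The main obstacle is matrix multiplication $e_1 \cdot e_2$ over $R \neq T$. Here $T$'s $\cdot$ cannot be used, since $R$'s operations are unrelated to $T$'s. We simulate $(A B)_{ij} = \bigoplus_k A_{ik}\otimes B_{kj}$ using $\mathsf{sum}[\underline{0},\oplus]$ with $\oplus := \iota_R\circ(+_R)\circ(\rho_R\times\rho_R)$, $\underline{0}$ the constant $\iota_R(0_R)$, and $\otimes$ defined analogously, all added to $\Omega$.

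Concretely, we use a canonical-vector loop over the inner dimension $\beta$, with variables $v$ (canonical in $\beta$) and $X$ (the accumulator of type $\alpha\times\gamma$). At step $k$ we extract column $k$ of $A$ as $a = A\cdot v$ and row $k$ of $B$ as $b = \transp{v}\cdot B$. These extractions are fine over $T$, since they multiply by a $0/1$ vector with $\iota(0)=0_T$ and $\iota(1)=1_T$, so each sum has exactly one nonzero term. We form the outer product pointwise as
\[
\Apply[\otimes](a\cdot\transp{\one(\transp{B})},\ \one(A)\cdot b).
\]
Both factors are broadcasts by $1_T$-vectors and are therefore exact over $T$. We then accumulate $X := \Apply[\oplus](X, \text{that})$, starting from $X := \Apply[\underline{0}](\ldots)$ of shape $\alpha\times\gamma$, and the loop is rewritten into dec-$\ML$ by \cref{lem:for-to-sifor,thm:sifor-eq-dec}.

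The delicate points to verify are these:
\begin{itemize}
\item Every broadcast used only involves products with exact $0_T/1_T$ entries, so no genuine $T$-arithmetic mixes encoded values.
\item The loop runs over the right dimension; the vector $v$ ranging over $\beta$ is obtainable as $\one(\transp{A})$.
\item Shapes typecheck.
\end{itemize}
Finally, at the root we apply $\rho$ pointwise if the output must be read back in its original semiring; since the statement asks for equivalence up to this encoding, the invariant at the root completes the proof.
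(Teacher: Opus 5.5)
Your proof is correct, but it differs from the paper's in two places.

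\textbf{Encoding.} The paper uses an arbitrary injection $\mathsf{enc}_R$ and patches the constructs that depend on $0$ and $1$. It rewrites $\one(e)$ with a constant $\Apply$. It rewrites $\pickAny$ by mapping $\mathsf{enc}_R(0_R)$ to $0_T$ and everything else to $1_T$ via $\mathsf{cond}$, running $\pickAny$ on that mask, and using it to select entries of $\encoded{e}$. You instead normalize the encoding so that $\iota_R(0_R)=0_T$ and $\iota_R(1_R)=1_T$. Then $\one$, $\diag$ and $\pickAny$ translate verbatim, and you carry the exact invariant $\sem{\encoded{e}}{\encoded{\I}}=\iota_R(\sem{e}{\I})$ instead of the paper's weaker $\mathsf{dec}_R$-form. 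This is more rigorous than the paper. The paper's $\mathsf{cond}$ test checks exact equality with $\mathsf{enc}_R(0_R)$, so it actually needs your exact invariant. The paper also treats $\diag$ as semiring-independent, yet the off-diagonal $0_T$ is a valid encoding of $0_R$ only when $\mathsf{enc}_R(0_R)=0_T$, which is precisely your normalization.

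\textbf{Multiplication.} The paper fills the result entry by entry with nested loops over the rows of $e_1$ and the columns of $e_2$ ($\mathsf{matmul}$, $\mathsf{rowmul}$, $\mathsf{cellmul}$). Each entry is a fold $\mathsf{sum}[\underline{0},\oplus]$ over pointwise $\encoded{\otimes_R}$-products, implemented with the cyclic permutation $\mathsf{rotate}$ built from $S_{\leq}$, $\mathsf{emax}$ and $\mathsf{emin}$. You use one canonical loop over the inner dimension and accumulate rank-one outer products pointwise with $\encoded{\oplus_R}$. This relies only on the fact that multiplying by exact $0_T/1_T$ vectors is exact in $T$. Your version is shorter and needs none of the $\mathsf{rotate}$ machinery; the paper's yields a reusable generic fold.

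Both routes rely on \cref{thm:sifor-eq-dec} to eliminate canonical vectors, and hence on a subtraction-like function over $T$. Your clause allowing any pointwise function over $T$ in the repertoire covers this. The trivial-$R$ case is harmless: your constant-$\Apply$ rule for $\one$ handles $\iota_R(1_R)=0_T$, and your multiplication only uses $\iota_R(0_R)=0_T$. Your closing remark about applying $\rho$ at the root should be read as decoding outside the language. A dec-$\ML$ expression over $T$ cannot output $R$-values, so equivalence is up to encoding, as in the paper.
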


\begin{proof}
    Let $\mathsf{enc}_R : R \to T$ be an arbitrary injective function from $R$ to $T$, and $\mathsf{dec}_R : T \to R$ a retraction of $\mathsf{enc}_R$.
    Let $\Omega$ be the pointwise function repertoire used in $e$, and let $f : R_1 \times \cdots \times R_k \to R_o$ be in $\Omega$.
    Then we define:
    \[
        \encoded{f}(e_1, \ldots, e_k) = \mathsf{enc}_o(f(\mathsf{dec}_{R_1}(e_1), \ldots, \mathsf{dec}_{R_k}(e_k)))
    \]

    Intuitively, $\encoded{f}$ is an encoded version of $f$ operating in semiring $T$.
    Additionally, for any semiring $R \in \texttt{SRing}$, let $\oplus_R$ and $\otimes_R$ be the addition and multiplication operator of $R$, respectively.
    We define the following operators:
    \begin{align*}
        e_1 \encoded{\oplus_R} e_2 = \mathsf{enc}_R(\mathsf{dec}_R(e_1) \oplus_R \mathsf{dec}_R(e_2)) \\
        e_1 \encoded{\otimes_R} e_2 = \mathsf{enc}_R(\mathsf{dec}_R(e_1) \otimes_R \mathsf{dec}_R(e_2))
    \end{align*}

    Let $\encoded{\Omega}$ be the repertoire consisting of:
    \begin{itemize}
        \item $\encoded{f}$ for all functions $f \in \Omega$;
        \item $\encoded{\oplus_R}$ and $\encoded{\otimes_R}$ for all $R_R \in \texttt{SRing}$;
        \item A helper function $\mathsf{cond}$, defined as:
              \[
                  \mathsf{cond}(w, x, y, z) = \begin{cases}
                      y & w = x    \\
                      z & w \neq x
                  \end{cases}
              \]
        \item A binary operator $(-)$ for which the usual identities $0 - 0 = 1 - 1 = 0$ and $1 - 0 = 1$ hold.
              For other input values the behavior may be assumed as arbitrary.
    \end{itemize}

    Because all functions in $\encoded{\Omega}$ are defined over $T$, $\encoded{\Omega}$ is indeed a repertoire for dec-$\ML$ over $T$.

    We show that any muse-$\ML$ expression can be rewritten into an equivalent dec-$\ML$ expression over $T$ with the repertoire $\encoded{\Omega}$.
    The proof now proceeds by induction on the structure of expressions.
    We give rewrite rules of the form $\texttt{OP}(e) \Rightarrow \texttt{OP}'(\encoded{e})$ that describe how to rewrite muse-$\ML$ constructs into dec-$\ML$ under the induction hypothesis that inner muse-$\ML$ expressions ($e$ in this example) can be rewritten into \emph{semantically equivalent} dec-$\ML$ expressions ($\encoded{e}$),
    i.e., $\sem{e}{\I} = \mathsf{dec}_R(\sem{\encoded{e}}{\encoded{\I}})$, where $e$ is defined over semiring $R$, and instance $\encoded{\I}$ is defined as:

    \[
        \encoded{\I}(M) = \mathsf{enc}_R(A) \text{ if } \I(M) = A \text{ and } A \text{ is a matrix over semiring } R
    \]

    \textbf{Transpose, Diagonalization and Loops.}
    A first observation is that the semantics of $\transp{e}$, $\diag(e)$ and loops is independent of the semirings associated with the inner expressions, and hence the rewrite rules for these expressions are trivial:

    \begin{align*}
        \transp{e}                                                                  & \Rightarrow \transp{\encoded{e}}                \\
        \diag(e)                                                                    & \Rightarrow \diag(\encoded{e})                  \\
        \texttt{for}\ \{ M_1 := e_1', \ldots, M_m := e_m' \}(e_d, e_1, \ldots, e_m) & \Rightarrow                                     \\
        \texttt{for}\ \{ M_1 := \encoded{e_1'}, \ldots, M_m := \encoded{e_m'} \}(\encoded{e_d}, \encoded{e_1}, \ldots, \encoded{e_m}) \\
    \end{align*}

    \textbf{Pointwise Function Application.}
    By definition of $\encoded{f}$, the rewrite rule for pointwise function application is defined as:

    \[
        \Apply[f](e_1, \ldots, e_k) \Rightarrow \Apply[\encoded{f}](\encoded{e_1}, \ldots, \encoded{e_k})
    \]

    \textbf{One-Vector.}
    The expression $\one(e)$ must be rewritten because it would fill the output vector with the multiplicative identity as defined by $T$, rather than the multiplicative identity of the semiring $R$ (denoted $1_R$) over which $e$ is defined.
    To compensate we rewrite the expression as

    \[
        \one(e) \Rightarrow
        \Apply[\lambda x.\ \mathsf{enc}_R(1_R)](\one(\encoded{e}))
    \]

    \textbf{pickAny.}
    The semantics of $\pickAny(e)$ depend on the additive identity $0_R$ of the semiring $R$ that $e$ is defined over.
    Let $0_T$ and $1_T$ be the additive and multiplicative identities of semiring $T$, respectively.
    Then the rewrite rule is:

    \begin{tabbing}
        L0\=L1\=L2\=L3\=L4 \kill
        $\pickAny(e) \Rightarrow \Apply[\lambda v,p.\ \mathsf{cond}(p, 1_T, v, \mathsf{enc}_R(0_R))]($\\
        \>$\encoded{e},$\\
        \>$\pickAny(\Apply[\lambda v.\ \mathsf{cond}(v, \mathsf{enc}_R(0_R), 0_T, 1_T)](\encoded{e})))$
    \end{tabbing}

    Intuitively, the inner $\Apply$ maps zero (according to $R$) entries in $A$ to $0_T$, and all other values to $1_T$.
    Evaluating $\pickAny$ over this matrix returns a matrix with at most one entry per row with value $1_T$, at the first position where $A$ has a non-zero value.
    Finally, the outer $\Apply$ uses the result of $\pickAny$ as a mask to select which values from $A$ to preserve, setting the others to $0_R$.

    \textbf{Matrix Multiplication.}
    The rewrite rule for matrix multiplication is non-trivial because it requires simulating an addition and multiplication operator that may be very different from the ones associated with semiring $T$.
    We define a helper function $\mathsf{cellmul}$ to compute a single value corresponding to $C_{rc}$ of matrix multiplication $C = A \cdot B$ given row $A_r$ and column $B_c$, parameterized over the additive identity $\underline{0}$, addition operator $\oplus$ and multiplication operator $\otimes$:

    \[
        \mathsf{cellmul[\underline{0}, \oplus, \otimes]}(e_1, e_2) = \mathsf{sum}[\underline{0}, \oplus](\Apply[\otimes](\transp{e_1}, e_2))
    \]

    By iterating over the rows of $A$ and the columns of $B$ we can express arbitrary matrix multiplication, computing the entries one by one:

    \begin{tabbing}
        L0\=L1\=L2\=L3\=L4 \kill
        $\mathsf{rowmul}[\underline{0}, \oplus, \otimes](e_1, e_2) = $\\
        $\SyntaxStyle{let}\ v = \one(\transp{e_2}) \ \SyntaxStyle{in}$ \\
        $\texttt{for}\,v\ \{$\\
        \>$X := \SyntaxStyle{let}\ B_c = e_2 \cdot v \ \SyntaxStyle{in}$ \\
        \>\>$X + \mathsf{cellmul}[\underline{0}, \oplus, \otimes](e_1, B_c) \cdot \transp{v}$\\
        $\}(\Apply[0](e_1 \cdot e_2))$
    \end{tabbing}

    \begin{tabbing}
        L0\=L1\=L2\=L3\=L4 \kill
        $\mathsf{matmul}[\underline{0}, \oplus, \otimes](e_1, e_2) = $\\
        $\SyntaxStyle{let}\ v = \one(e_1) \ \SyntaxStyle{in}$ \\
        $\texttt{for}\,v\ \{$\\
        \>$X := \SyntaxStyle{let}\ A_r = \transp{v} \cdot e_1 \ \SyntaxStyle{in}$ \\
        \>\>$X + v \cdot \mathsf{rowmul}[\underline{0}, \oplus, \otimes](A_r, e_2)$\\
        $\}(\Apply[0](e_1 \cdot e_2))$
    \end{tabbing}

    The rewrite rule for matrix multiplication $e_1 \cdot e_2$ over semiring $R$ is now simply:

    \begin{tabbing}
        L0\=L1\=L2\=L3\=L4 \kill
        $e_1 \cdot e_2 \Rightarrow \mathsf{matmul}[\mathsf{enc}_R(0_R), \encoded{\oplus_R}, \encoded{\otimes_R}](\encoded{e_1}, \encoded{e_2})$
    \end{tabbing}

    The above rewrite rules cover all language constructs in muse-$\ML$, and therefore we conclude that for any muse-$\ML$ expression there exists an equivalent dec-$\ML$ expression.
\end{proof}

\section{GraphAlg Core}\label{sec:core-lang}
We are now ready to define an \emph{instantiation} of the muse-$\ML$ language (defined in \cref{sec:muse-ml}), called \graphalg{} Core.
\graphalg{} Core is the simplified, or \emph{desugared}, version of the \graphalg{} language defined in \cref{sec:full-lang}, having the same expressive power as the full language (this will become apparent in \cref{sec:full-lang}).
\graphalg{} Core not only serves as a vehicle for theoretical analysis:
it is also used as an intermediate representation inside the \graphalg{} compiler~\cite{de_graaf_wildarchgraphalg_2026}.

\graphalg{} Core makes precise the following properties of the language that are irrelevant for theoretical analysis, but are necessary for an executable specification:
\begin{itemize}
    \item The repertoire $\Omega$ is defined by extending the grammar to include syntax for defining pointwise functions.
          An expression in this syntax is called a \emph{scalar expression}.
          While $\Omega$ is not finite, execution platforms only need to support the fixed set of operations included in the syntax to execute arbitrary scalar expressions.

    \item We fix the set of allowed semirings $\texttt{SRing}$ to a number of semirings that are both highly relevant to the implementation of graph algorithms and practical to support across different execution platforms.
\end{itemize}

\subsection{Syntax}
The \graphalg{} Core language syntax is obtained from muse-$\ML$, with the following additional syntax:

\begin{align*}
    f ::= & (C_1,\ldots,C_n) \ e            & \text{(pointwise function)} \\
    e ::= & \ C                             & \text{(scalar variable)}    \\
    |     & \ r(l)                          & \text{(scalar literal)}     \\
    |     & \ e_1 \ \{+,\cdot,-,/,=\} \ e_2 & \text{(scalar arithmetic)}  \\
    |     & \ \cast(r, e)                   & \text{(scalar cast)}        \\
\end{align*}

We use the following notation:
\begin{itemize}
    \item $e$ represents a scalar expression
    \item $C$ names a scalar value defined in a pointwise function.
    \item $l$ is a literal value such as $42$, $3.14$ or `true'.
    \item $r$ is a semiring from the set:
          \[
              \texttt{SRing} = \{ \ringBool, \ringInt, \ringReal, \ringTropInt, \ringTropReal, \ringTropMaxInt, \ringTropMaxReal \}
          \]
          This includes the boolean, integer and real semirings $\ringBool$, $\ringInt$ and $\ringReal$, respectively.
          For the integers and real numbers, their tropical min-plus and max-plus semirings are included as well.
          The min-plus semirings define the addition operator as $\min$ and multiplication as $+$, while the max-plus semirings use $\max$ and $+$.
\end{itemize}

\subsection{Type System}
The typing rules match those of muse-$\ML$ except for $\Apply[f]$, where we now also need to check and infer the type of $f$.

\medskip
\begin{tabular}{lc}
    $\ttypes(\Apply[(C_1,\ldots,C_k)\ e](E_1,\ldots ,E_k)):= \alpha \times \beta \times R_o$, if $\ttypes(E_i) =$   \\
    $\alpha \times \beta \times R_i$ for all $i \in [1 \ldots k]$ and $\ttype_{\Sch_f}(e) = 1 \times 1 \times R_o$, \\
    where $\Sch_f = \{ C_1 := 1 \times 1 \times R_1,\ldots,C_k := 1 \times 1 \times R_k \}$;
\end{tabular}
\medskip

Typing rules for scalar expressions are presented below.
Because scalar expressions always produce a $1 \times 1$ matrix, we write the type as $R$ instead of $1 \times 1 \times R$.

\begin{itemize}
    \item $\ttypes(C):= \ttypes(C)$, for a matrix variable $C\in\Mnam$;
    \item $\ttypes(r(l)):= r$ if $r \in \texttt{SRing}$;
    \item $\ttypes(e_1 + e_2):= R$ if $\ttypes(e_1) = \ttypes(e_2) = R$;
    \item $\ttypes(e_1 \cdot e_2):= R$ if $\ttypes(e_1) = \ttypes(e_2) = R$;
    \item $\ttypes(e_1 - e_2):= R$ if $\ttypes(e_1) = \ttypes(e_2) = R \in \{ \ringInt, \ringReal \}$;
    \item $\ttypes(e_1 / e_2):= \ringReal$ if $\ttypes(e_1) = \ttypes(e_2) = \ringReal$;
    \item $\ttypes(e_1 = e_2):= \ringBool$ if $\ttypes(e_1) = \ttypes(e_2) = R$; and
    \item $\ttypes(\cast(r,e)):= r$ if $r \in \texttt{SRing}$ and $\ttypes(e) = R$.
\end{itemize}

\subsection{Semantics}
Like the typing rules above, \graphalg{} Core largely follows the semantics of muse-$\ML$.
The semantics of $\Apply[f]$ and scalar expressions are defined as follows:

\begin{itemize}
    \item $\sem{\Apply[(C_1,\ldots,C_k)\ e](E_1,\ldots,E_k)}{\I}$ is a matrix $A$ of the same size as $\sem{e_1}{\I}$,
          and where $A_{ij}$ has the value $\sem{e}{\I_{ij}}$, with\\
          $\I_{ij} = \{ C_1 := \sem{E_1}{\I}_{ij},\ldots, C_k := \sem{E_k}{\I}_{ij} \}$;
    \item $\sem{r(l)}{\I} := l$
    \item $\sem{e_1 + e_2}{\I} := \sem{e_1}{\I} \oplus_R \sem{e_2}{\I}$, with $R$ the semiring of $\sem{e_1}{\I}$.
    \item $\sem{e_1 \cdot e_2}{\I} := \sem{e_1}{\I} \otimes_R \sem{e_2}{\I}$, with $R$ the semiring of $\sem{e_1}{\I}$
    \item $\sem{e_1 - e_2}{\I} := \sem{e_1}{\I} - \sem{e_2}{\I}$
    \item $\sem{e_1 / e_2}{\I} := \sem{e_1}{\I} / \sem{e_2}{\I}$
    \item $\sem{e_1 = e_2}{\I} := \sem{e_1}{\I} = \sem{e_2}{\I}$
    \item $\sem{\cast(r, e)}{\I} := \cast(R, r, \sem{e}{\I})$, whose definition is given below.
\end{itemize}

\subsubsection{Casting}
Function $\cast(R, T, v)$ casts a value from semiring $R$ to semiring $T$.
A key rule in casting is \emph{additive identity preservation}: If $v = 0_R$, the value after casting is $0_T$ (see the second rule below).
This rule takes precedence over all rules that follow it.
For the remaining casting rules, only the domain of the semiring is significant.
To simplify the rule set, we write for example $\mathbb{Z}$ where any semiring over integers is accepted.

\begin{align*}
    \cast(R, R, v)                   & = v                        \\
    \cast(R, r, 0_R)                 & = 0_r                      \\
    \cast(R, \mathbb{B}, v)          & = 1 \text{ if } v \neq 0_R \\
    \cast(\mathbb{B}, r, 1)          & = 1_r                      \\
    \cast(\mathbb{Z}, \mathbb{R}, v) & = v                        \\
    \cast(\mathbb{R}, \mathbb{Z}, v) & = \lfloor v \rfloor        \\
\end{align*}

Preserving additive identities across casts is an important property for program optimization of matrices with sparse representations.
\graphalg{} is designed to support efficient execution over sparse matrix representations that store only non-zero elements (values that are not the additive identity).
Because of additive identity preservation, casting a sparse matrix to a different semiring only requires updating non-zero elements, because implicit zeros in the input remain zero in the output.

\section{The Full GraphAlg Language}\label{sec:full-lang}
We can now give a definition of the `full' \graphalg{} language that users interact with.
The \graphalg{} language can be seen as syntactic sugar over \graphalg{} Core, with the aim of presenting an interface that is familiar to many programmers and facilitates writing algorithms in a concise way.
Key differences with \graphalg{} Core include:
\begin{itemize}
    \item An imperative programming style rather than the strictly functional nature of \graphalg{} Core;
    \item \graphalg{} programs are composed of functions (where \graphalg{} Core only has expressions);
    \item Dedicated syntax for commonly used patterns, such as aggregations across rows and/or columns.
          These are inspired by the GraphBLAS~\cite{davis_algorithm_2019}, an effort to standardize building blocks (but not a language) for writing graph algorithms in linear algebra.
    \item Alternative syntax for operations that fall outside the ASCII character set, such as $e^T$.
\end{itemize}

In this section we define \graphalg{} in terms of \graphalg{} Core.
In this way, we obtain that \graphalg{} has the same expressive power as \graphalg{} Core.
We derive two key properties of \graphalg{} that enable the transformation of arbitrary \graphalg{} programs into \graphalg{} Core.
Firstly, we show that the imperative programming style with statements and mutable variables that is idiomatic in \graphalg{} translates into purely functional \graphalg{} Core expressions (\cref{sec:stmts}).
Secondly, we establish that any non-scalar \graphalg{} expression with scalar inputs can be written as a \emph{scalar expression} (\cref{thm:scalarize}), and use this to rewrite references between functions (\cref{sec:func-refs}).

The \graphalg{} language is based on \graphalg{} Core: the expressions $\diag(e)$ and $\pickAny(e)$ have the same semantics in \graphalg{} as they do in \graphalg{} Core.
We also carry over the transpose and matrix multiplication operations, but denote them as $e \texttt{.T}$ and $e_1 \texttt{ * } e_2$, respectively, to stay within the ASCII character set.
The expressions $\one(e)$, $\Apply[f](e)$ and $\texttt{for}\ \{\ldots\}(\ldots)$ are not included in \graphalg{}, as they are subsumed by the syntax introduced in the sections below.

\subsection{Summation}
Since summation along one or more dimensions is commonly used, \graphalg{} has dedicated syntax for this:

\begin{align*}
    e & ::= \texttt{reduceRows(} e \texttt{)}       &  & \text{(sum per row)}    \\
      & \mid\quad \texttt{reduceCols(} e \texttt{)} &  & \text{(sum per column)} \\
      & \mid\quad \texttt{reduce(} e \texttt{)}     &  & \text{(sum to scalar)}  \\
\end{align*}

Naming follows that of the equivalent operations in the GraphBLAS.
These operations can be defined in terms of multiplication with one-vectors:

\begin{align*}
    \texttt{reduceRows(} e \texttt{)} & \Rightarrow e \cdot \one(\transp{e})                    \\
    \texttt{reduceCols(} e \texttt{)} & \Rightarrow \transp{\one(e)} \cdot e                    \\
    \texttt{reduce(} e \texttt{)}     & \Rightarrow \texttt{reduceRows}(\texttt{reduceCols}(e)) \\
\end{align*}

\subsection{Scalar Operators}
\graphalg{} supports all scalar expressions of graphalg{} Core, with the following additions and substitutions:

\begin{align*}
    e & ::= \texttt{!} e                 &  & \text{(logical negation)}        \\
      & \mid \quad \texttt{-} e          &  & \text{(numerical negation)}      \\
      & \mid \quad e_1 \texttt{ * } e_2  &  & \text{(scalar multiply)}         \\
      & \mid \quad e_1 \texttt{ == } e_2 &  & \text{(scalar equality)}         \\
      & \mid \quad e_1 \texttt{ != } e_2 &  & \text{(scalar inequality)}       \\
      & \mid \quad \texttt{zero}(r)      &  & \text{(additive identity)}       \\
      & \mid \quad \texttt{one}(r)       &  & \text{(multiplicative identity)} \\
\end{align*}
where $r$ is a semiring.
We include an alternative syntax for scalar multiplication (to fit the ASCII character set) and scalar equality (to align with commonly-used programming languages), the other operators do not have a direct equivalent in \graphalg{} Core.
The above operators are defined in terms of \graphalg{} Core as follows:

\begin{align*}
    \texttt{!} e          & \Rightarrow e = \ringBool(\text{false}) \\
    \texttt{-} e          & \Rightarrow (e - e) - e                 \\
    e_1 \texttt{ * } e_2  & \Rightarrow e_1 \cdot e_2               \\
    e_1 \texttt{ == } e_2 & \Rightarrow e_1 = e_2                   \\
    e_1 \texttt{ != } e_2 & \Rightarrow \texttt{!}(e_1 = e_2)       \\
    \texttt{zero}(r)      & \Rightarrow r(0_r)                      \\
    \texttt{one}(r)       & \Rightarrow r(1_r)                      \\
\end{align*}
where $0_r$ and $1_r$ represent the additive and multiplicative identities of semiring $r$, respectively.
We add the operations \texttt{zero} and \texttt{one} because some semirings have identity elements that cannot be expressed in ASCII (such as $\infty$).
The expression $\texttt{-} e$ can alternatively be written as $r(0_r) - e$, where $r$ is the semiring of $e$.

\subsection{Pointwise Operators}
The following syntax allows pointwise application of scalar operators to full matrices:

\begin{align*}
    e & ::=  e_1 \texttt{ (.} \{ \texttt{+}, \texttt{-}, \texttt{*}, \texttt{/}, \texttt{==}, \texttt{!=} \} \texttt{) } e_2 &  & \text{(pointwise operator)} \\
      & \mid\quad \texttt{cast<} r \texttt{>}(e)                                                                             &  & \text{(pointwise cast)}     \\
\end{align*}

These constructs lower to pointwise function application:

\begin{align*}
    e_1 \texttt{(.} \diamond \texttt{) } e_2 & \Rightarrow \Apply[(v_1, v_2)\ v_1 \diamond v_2](e_1, e_2) \\
    \texttt{cast<} r \texttt{>}(e)           & \Rightarrow \Apply[(v)\ \cast(r, v)](e)
\end{align*}

\subsection{Matrix Properties}
\graphalg{} provides native support for common matrix properties:

\begin{align*}
    e & ::=  e \texttt{.nrows}       &  & \text{(number of rows)}        \\
      & \mid \quad e \texttt{.ncols} &  & \text{(number of columns)}     \\
      & \mid \quad e \texttt{.nvals} &  & \text{(count non-zero values)} \\
\end{align*}

The names of the properties follow the GraphBLAS.
Matrix property expressions can be rewritten as:

\begin{align*}
    e \texttt{.nrows} & \Rightarrow \texttt{reduce}(\one(\texttt{cast<int>}(e)))                \\
    e \texttt{.ncols} & \Rightarrow (\transp{e}) \texttt{.nrows}                                \\
    e \texttt{.nvals} & \Rightarrow \texttt{reduce}(\texttt{cast<int>}(\texttt{cast<bool>}(e))) \\
\end{align*}

\subsection{Creating Matrices and Vectors}
We introduce operations to create matrices of different sizes:

\begin{align*}
    e & ::= \texttt{Matrix<} r \texttt{>} (e_1, e_2)   &  & \text{(zero matrix)} \\
      & \mid \quad \texttt{Vector<} r \texttt{>} (e_1) &  & \text{(zero vector)} \\
\end{align*}

In a valid \graphalg{} program, the input expressions $e_1$ and $e_2$ must refer to the dimensions of an existing matrix, that is, they must be of the form $e \texttt{.nrows}$ or $e \texttt{.ncols}$.
To work with this type of expressions, we define a helper function $\mathsf{ndim}$ as:

\begin{align*}
    \mathsf{ndim}(e \texttt{.nrows}) & = \one(e)          \\
    \mathsf{ndim}(e \texttt{.ncols}) & = \one(\transp{e}) \\
\end{align*}

We now give the rewrite rules for the zero matrix and vector:

\begin{align*}
    \texttt{Matrix<} r \texttt{>} (e_1, e_2) & \Rightarrow \Apply[(a)\ \texttt{zero}(r)](\mathsf{ndim}(e_1) \cdot \transp{\mathsf{ndim}(e_2)}) \\
    \texttt{Vector<} r \texttt{>} (e)        & \Rightarrow \Apply[(a)\ \texttt{zero}(r)](\mathsf{ndim}(e))                                     \\
\end{align*}

\subsection{Functions}\label{sec:full-funcs}
Where in \graphalg{} Core the type of an expression $e$ relies on a given schema $\Sch$, in \graphalg{} expressions are contained in \emph{functions} that provide a syntax for defining such schemas.
A \graphalg{} \emph{program} is a collection of such functions.
See \cref{fig:example-program} for an example program.

\begin{figure}[h]
    \lstset{basicstyle=\ttfamily, language=GraphAlg}
\begin{lstlisting}
func NotEquals(v:int, n:int) -> bool { return v != n; }

func SumUnless(v:Vector<s,int>, n:int) -> int {
  vfilt = select(NotEquals, v, n);
  return reduce(vfilt);
}
\end{lstlisting}
    \caption{
        An example \graphalg{} program with multiple functions.
        Function \texttt{SumUnless} sums the values in \texttt{v} that do not equal \texttt{n}.
        Function \texttt{NotEquals} is a helper function used in \texttt{SumUnless}.
    }
    \label{fig:example-program}
\end{figure}

The syntax for programs follows the grammar:

\begin{align*}
    P & ::= F_1 \cdots F_n                                                                                                                     &  & \text{(program)}             \\
    F & ::= \texttt{func } f(M_1: t_1, \ldots, M_n: t_n) \texttt{ -> } t \texttt{ \{}                                                                                            \\
      & \quad \quad \quad S_1                                                                                                                                                    \\
      & \quad \quad \quad \, \vdots                                                                                                                                              \\
      & \quad \quad \quad S_k                                                                                                                                                    \\
      & \quad \quad \quad \texttt{return } e \texttt{;}                                                                                                                          \\
      & \quad \quad \texttt{\}}                                                                                                                &  & \text{(function)}            \\
    t & ::= r                                                                                                                                  &  & \text{(scalar)}              \\
      & \mid\quad \texttt{Matrix<} s_1 \texttt{, } s_2 \texttt{, } r \texttt{>}                                                                &  & \text{(matrix)}              \\
      & \mid\quad \texttt{Vector<} s \texttt{, } r \texttt{>}                                                                                  &  & \text{(vector)}              \\
    s & :: = 1                                                                                                                                 &  & \text{(explicit `1' size)}   \\
      & \mid\quad \alpha                                                                                                                       &  & \text{(size symbol)}         \\
    S & ::= M\ (\texttt{<} (\texttt{!})^? M_m \texttt{>})^?\ (\texttt{[:} (\texttt{,:})^?\texttt{]})^? \texttt{ = } e \texttt{;}               &  & \text{(assignment)}          \\
      & \mid\quad M\ \texttt{+= } e \texttt{;}                                                                                                 &  & \text{(accumulate)}          \\
      & \mid\quad \texttt{for } M \texttt{ in } e_1 \texttt{:} e_2 \texttt{ \{} S_1 \cdots S_k \texttt{\} } (\texttt{until } e_3 \texttt{;})^? &  & \text{(constant range loop)} \\
      & \mid\quad \texttt{for } M \texttt{ in } e_1 \texttt{ \{} S_1 \cdots S_k \texttt{\} } (\texttt{until } e_2 \texttt{;})^?                &  & \text{(matrix size loop)}    \\
\end{align*}

For now, we only consider functions of the form:
\[
    \texttt{func } f(M_1: t_1, \ldots, M_n: t_n) \texttt{ -> } t \texttt{ \{ return } e \texttt{; \}}
\]

We defer the definition of the semantics of statement syntax $S$ to \cref{sec:stmts}, where we show that every function in \graphalg{} can be written in the above form (that is, without the $S_1 \cdots S_k$).
The parameters of a function define the schema for expression $e$ as follows.
For a parameter $M:t$, its type $\ttype(t)$ is defined as:

\begin{align*}
    \ttype(r)                                                             & := 1 \times 1 \times r     \\
    \ttype(\texttt{Matrix<} s_1 \texttt{, } s_2 \texttt{, } r \texttt{>}) & := s_1 \times s_2 \times r \\
    \ttype(\texttt{Vector<} s \texttt{, } r \texttt{>})                   & := s \times 1 \times r     \\
\end{align*}

Syntactically, size symbols are an alphanumeric character sequence that start with a letter (to forbid size symbols such as `1' or `42').
Semiring $r$ can be written using the following ASCII-compatible syntax.
\texttt{bool}, \texttt{int} and \texttt{real} denote $\ringBool$, $\ringInt$ and $\ringReal$, respectively, while
tropical semirings are written as \texttt{trop\_int}, \texttt{trop\_real}, \texttt{trop\_max\_int} and \texttt{trop\_max\_real} for $\ringTropInt$, $\ringTropReal$, $\ringTropMaxInt$ and $\ringTropMaxReal$.

The schema for an expression $e$ in a function $f$ assuming a list of parameters $M_1 : t_1, \ldots, M_n : t_n$ is defined as $\Sch_f(M_i) := \ttype(t_i)$, with $i \in \{ 1,\ldots,n \}$.
Function $f$ is \emph{well-typed} if the type of expression $e$ matches the declared return type of the function, that is, $\ttype_{\Sch_f}(e) = \ttype(t)$.
The full type (rather than the return type) of a function consists of the types of its parameters and its return type:

\begin{multline*}
    \ttype(\texttt{func } f(M_1: t_1, \ldots, M_n: t_n) \texttt{ -> } t \texttt{ \{}  \cdots \texttt{\}}) := \\
    \ttype(t_1) \times \cdots \times \ttype(t_n) \times \ttype(t)
\end{multline*}

The semantics of a \graphalg{} function $f$ are expressed in terms of \graphalg{} Core.
Let $A_1,\ldots,A_n$ be the input matrices to bind to the parameters of $f$, and let $\texttt{return } e \texttt{;}$ be the body of $f$.
Then, let $e'$ be the \graphalg{} Core expression obtained by applying the rewrite rules defined in the previous sections to $e$.
We define the instance $\I(M_i) := A_i$, where $i \in \{ 1,\ldots,n \}$.
Finally, the value of function $f$ applied to $A_1,\ldots,A_n$ is $\sem{e'}{\I}$, having type $\ttype(t)$.
\cref{cor:full-to-core} follows directly from the above definition.

\begin{corollary}\label{cor:full-to-core}
    Let $f$ be a \graphalg{} function with $\ttype(f) := \tau_1 \times \cdots \times \tau_n \times \tau$,
    and let $A_1,\ldots,A_n$ be matrices such that $A_i$ has type $\tau_i$ for $i \in \{ 1,\ldots,n \}$.
    Then there exists a \graphalg{} Core expression $e$ that evaluates $f(A_1,\ldots,A_n)$.
\end{corollary}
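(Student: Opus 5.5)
The corollary is essentially immediate from the definition of the semantics of a \graphalg{} function just given. The plan is to take the body $\texttt{return } e\texttt{;}$ of $f$ and apply the rewrite rules of the preceding subsections exhaustively, obtaining an expression $e'$. The rules cover summation, scalar operators, pointwise operators, matrix properties, and matrix and vector creation. We then argue that $e'$ is a well-typed \graphalg{} Core expression over the schema $\Sch_f$, and that evaluating it in the instance $\I(M_i) := A_i$ yields $f(A_1,\ldots,A_n)$ by definition.

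The key steps, in order, are as follows.

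First, we show that rewriting terminates and eliminates all sugar. Each rule replaces one sugared construct by Core constructs, possibly together with sugared constructs of strictly smaller nesting. For instance, $\texttt{reduce}$ becomes $\texttt{reduceRows}$ and $\texttt{reduceCols}$, $\texttt{.ncols}$ becomes $\texttt{.nrows}$, and $\texttt{!=}$ becomes $\texttt{!}$ and $=$. A simple measure, namely the number of sugared symbols weighted by a fixed rank per construct, decreases with each step. Hence a normal form containing only Core syntax exists.

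Second, we show that each rule preserves types. The right-hand sides have the same type as the left-hand sides under $\Sch_f$. For example, $e\cdot\one(\transp{e})$ has type $\alpha\times 1\times R$ when $e$ has type $\alpha\times\beta\times R$. By induction on the expression, $\ttype_{\Sch_f}(e') = \ttype(t) = \tau$.

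Third, we observe that $\I$ is a valid instance over $\Sch_f$, because each $A_i$ has type $\tau_i$. Hence $\sem{e'}{\I}$ is defined, and it equals $f(A_1,\ldots,A_n)$ by the stated semantics.

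The only real obstacle is that the corollary is stated for arbitrary \graphalg{} functions, while the semantics so far covers only bodies of the form $\texttt{return } e\texttt{;}$. Statements $S_1\cdots S_k$, and calls to other functions (e.g.\ $\texttt{select}$ with a function argument), are deferred. I would state explicitly that the argument above handles the statement-free case, and defer the general case to the results promised later. These are the elimination of statements in \cref{sec:stmts}, and the scalarization of function references in \cref{sec:func-refs} via \cref{thm:scalarize}. Both reduce any function to the single-return form, after which the argument above applies verbatim.
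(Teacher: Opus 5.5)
Your proposal is correct and matches the paper. The paper also treats the corollary as immediate from defining $f(A_1,\ldots,A_n)$ as $\sem{e'}{\I}$ for the rewritten body $e'$, and it handles statements and function references only in the later subsections, exactly as you note. One small repair is needed: your weighted-count termination measure need not decrease under duplicating rules such as $\texttt{reduceRows}(e) \Rightarrow e \cdot \one(\transp{e})$ or $\texttt{-}e \Rightarrow (e - e) - e$ when $e$ still contains sugar, so either rewrite innermost-first or define the translation by structural recursion (the paper does not address termination at all).
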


\subsection{Function References}\label{sec:func-refs}
\graphalg{} introduces syntax to apply a function defined earlier in the same program, rather than providing the function definition inline as in \graphalg{} Core.

\begin{align*}
    e & ::= e_1 \texttt{ (.} f \texttt{) } e_2                                      &  & \text{(pointwise binary function)} \\
      & \mid\quad \texttt{apply(} f \texttt{, } e_1 (\texttt{, } e_2)^? \texttt{)}  &  & \text{(pointwise function)}        \\
      & \mid\quad \texttt{select(} f \texttt{, } e_1 (\texttt{, } e_2)^? \texttt{)} &  & \text{(pointwise selection)}       \\
\end{align*}

This notation is modeled after similar operations in the GraphBLAS.
Intuitively, the pointwise operations above apply function $f$ to every element in the matrix that $e_1$ evaluates to.
In the example program given in \cref{fig:example-program}, \texttt{select} behaves as follows.
For every position $\mathtt{v}_{ij}$, we evaluate $\texttt{NotEquals}(\mathtt{v}_{ij}, \mathtt{n})$.
If the result is true, set $\mathtt{vfilt}_{ij} = \mathtt{v}_{ij}$, or zero otherwise.

Let $\mathcal{F}$ be the function that maps the names of functions to their types, such that for any function $f$ defined before expression $e$ we have:

\[
    \mathcal{F}(f) := \ttype(\texttt{func } f(M_1: t_1, \ldots, M_n: t_n) \texttt{ -> } t \texttt{ \{}  \cdots \texttt{\}})
\]

Then the typing rules for the above constructs are:

\begin{itemize}
    \item $\ttypes(e_1 \texttt{ (.} f \texttt{) } e_2) := \alpha \times \beta \times R_3$ if $\ttypes(e_1) = \alpha \times \beta \times R_1$ and $\ttypes(e_2) = \alpha \times \beta \times R_2$ and $\mathcal{F}(f) = (1 \times 1 \times R_1) \times (1 \times 1 \times R_2) \times (1 \times 1 \times R_3)$;
    \item $\ttypes(\texttt{apply(} f \texttt{, } e \texttt{)}) := \alpha \times \beta \times R_2$ if $\ttypes(e) = \alpha \times \beta \times R_1$ and $\mathcal{F}(f) = (1 \times 1 \times R_1) \times (1 \times 1 \times R_2)$;
    \item $\ttypes(\texttt{apply(} f \texttt{, } e_1 \texttt{, } e_2 \texttt{)}) := \alpha \times \beta \times R_3$ if $\ttypes(e_1) = \alpha \times \beta \times R_1$ and $\ttypes(e_2) = 1 \times 1 \times R_2$ and $\mathcal{F}(f) = (1 \times 1 \times R_1) \times (1 \times 1 \times R_2) \times (1 \times 1 \times R_3)$;
    \item $\ttypes(\texttt{select(} f \texttt{, } e \texttt{)}) := \alpha \times \beta \times R$ if $\ttypes(e) = \alpha \times \beta \times R$ and $\mathcal{F}(f) = (1 \times 1 \times R) \times (1 \times 1 \times \ringBool)$;
    \item $\ttypes(\texttt{select(} f \texttt{, } e_1 \texttt{, } e_2 \texttt{)}) := \alpha \times \beta \times R_1$ if $\ttypes(e_1) = \alpha \times \beta \times R_1$ and $\ttypes(e_2) = 1 \times 1 \times R_2$ and $\mathcal{F}(f) = (1 \times 1 \times R_1) \times (1 \times 1 \times R_2) \times (1 \times 1 \times \ringBool)$;
\end{itemize}

Note that all input and output dimensions of referenced functions must be the explicit `1'.

Next, we show how under this assumption any \graphalg{} function can be rewritten into a pointwise function in $\Omega$, a property we will later use to rewrite the syntax presented above.

\begin{theorem}\label{thm:scalarize}
    Let $e$ be a \graphalg{} expression that is well-typed under schema $\Sch$.
    If for all $M \in \Mvar$ we have $\Sch(M) = 1 \times 1 \times R$, then there exists a scalar expression that is equivalent to $e$.
\end{theorem}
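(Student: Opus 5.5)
We argue by structural induction on $e$, after first desugaring $e$ into a \graphalg{} Core expression using the rewrite rules of the previous subsections (summation, scalar and pointwise operators, matrix properties, zero matrices). \cref{sec:stmts} handles statements, and we may assume function references in $e$ have already been replaced in order of definition. Since every rewrite rule preserves typing, we obtain a well-typed \graphalg{} Core expression $e'$ under the same schema $\Sch$.

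The key invariant is that \emph{every} subexpression of $e'$ has type $1 \times 1 \times R$ for some $R$. The only size terms available in $\Sch$ are the explicit $1$. Inspecting the typing rules, no construct introduces a size symbol absent from its inputs: transpose, $\one$, $\diag$, multiplication, $\Apply$, $\pickAny$ and loops only recombine size terms of their subexpressions. A routine induction on typing derivations therefore shows every subterm is $1\times 1$.

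For each construct we then produce a scalar expression $s(e')$ over the free scalar variables (the $M$ in $\Sch$), with $\sem{e'}{\I} = \sem{s(e')}{\I}$ viewed as $1\times1$ matrices:
\begin{itemize}
    \item a variable $M$ maps to the scalar variable $M$;
    \item $\transp{e}$ and $\diag(e)$ map to $s(e)$;
    \item $\one(e)$ maps to $r(1_r)$, where $r$ is the semiring of $e$;
    \item $e_1 \cdot e_2$ maps to $s(e_1)\cdot s(e_2)$, because a $1\times1$ product is a single semiring multiplication;
    \item $\pickAny(e)$ maps to $s(e)$, because a one-entry row keeps its only entry whether or not it is zero;
    \item $\Apply[(C_1,\ldots,C_k)\,e_0](E_1,\ldots,E_k)$ maps to $e_0[C_i := s(E_i)]$, that is, substitution of scalar expressions for the bound scalar variables.
\end{itemize}
Substitution is sound because scalar expressions are compositional. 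The only care needed is capture-avoiding renaming of the $C_i$.

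The main obstacle is the loop construct $\texttt{for}\,\{M_1:=e_1',\ldots,M_m:=e_m'\}(e_d,e_1,\ldots,e_m)$. Here $e_d$ is $1\times 1$, so $n=1$ and the body runs exactly once. Hence $\I_1(M_1) = \sem{e_1'}{\I_0}$ with $\I_0$ binding each $M_j$ to $\sem{e_j}{\I}$. The scalar expression is therefore $s(e_1')[M_j := s(e_j)]$, using the induction hypothesis on $e_1'$ under the extended schema $\Sch'$, which still assigns only $1\times1$ types. The step I expect to need most care is making this precise. The induction must be over expressions with an arbitrary all-scalar schema, not the fixed $\Sch$, and substitution for the loop-bound variables must be shown to commute with evaluation. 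I would prove a small substitution lemma, $\sem{s[M:=t]}{\I} = \sem{s}{\I[M:=\sem{t}{\I}]}$, by induction on scalar expressions, and use it for both the $\Apply$ and loop cases.
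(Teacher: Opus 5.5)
Your proposal is correct and follows the same route as the paper: reduce to \graphalg{} Core, note that only $1\times 1$ types can arise, and scalarize by structural induction, with transpose, $\diag$ and $\pickAny$ collapsing to their argument, multiplication becoming scalar multiplication, $\Apply$ inlined by substitution, and the loop running exactly once so that it becomes $e_1'$ with the initial values substituted for the $M_j$. Your additions make the same argument more rigorous: the variable and $\one(e)$ cases that the paper omits, an induction hypothesis generalized to arbitrary all-scalar schemas, and a substitution lemma, which you should state for simultaneous substitution because the $e_j$ may mention outer variables shadowed by the loop binders $M_j$.
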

\begin{proof}
    Recall that $\ML$, and by extension \graphalg{}, can only construct matrices based on dimensions present in the input schema.
    It follows that any sub-expression inside $e$ operates exclusively on 1-by-1 (scalar) matrices.
    By induction on the structure of expression $e$, we show that an equivalent scalar expression can be constructed if all inputs are 1-by-1 matrices.
    We only consider expressions from \graphalg{} Core, as the additional operations added in \graphalg{} can be expressed in terms of \graphalg{} Core.

    Matrix transpose, $\diag$ and $\pickAny$ over a 1-by-1 matrix return the original matrix.
    Matrix multiplication simplifies to scalar multiplication:

    \begin{align*}
        \transp{e}    & \Rightarrow e                    \\
        \diag(e)      & \Rightarrow e                    \\
        \pickAny(e)   & \Rightarrow e                    \\
        e_1 \cdot e_2 & \Rightarrow e_1 \texttt{ * } e_2 \\
    \end{align*}

    Function application is simplified by inlining the pointwise function:

    \[
        \Apply[(C_1,\ldots,C_k)\ e](e_1,\ldots,e_k) \Rightarrow e[C_1/e_1,\ldots,C_k/e_k]
    \]
    where by $e[C_1/e_1,\ldots,C_k/e_k]$ we denote the expression obtained by substitution of $e_i$ for any occurrence of $C_i$ in $e$.
    Lastly, \texttt{for} loops run for exactly one iteration, and therefore simplify to:

    \[
        \texttt{for}\, \{ M_1 := e_1', \ldots, M_m := e_m' \}(e_d, e_1, \ldots, e_m) \Rightarrow e_1'[M_1/e_1,\ldots,M_m/e_m]
    \]

\end{proof}

We define a helper function to fill a matrix with the same value at every position:

\[
    \mathsf{fill}(e_1, e_2) = \one(\Apply[(C)\ 1_R](e_1)) \cdot e_2 \cdot \Apply[(C)\ 1_R](\transp{\one(\transp{e_1})})
\]
where $1_R$ is the multiplicative identity of the semiring over which $e_2$ is defined.
The expression $\mathsf{fill}(e_1, e_2)$ evaluates to a matrix with the dimensions of $e_1$ and the value of $e_2$ at all positions.
Note that $e_2$ must be a $1 \times 1$ matrix for $\mathsf{fill}(e_1, e_2)$ to be well-typed.

Let $\texttt{func } f(M_1: t_1, \ldots, M_n: t_n) \texttt{ -> } t \texttt{ \{}  \cdots \texttt{\}}$ be an arbitrary function with 1-by-1 input and output types.
By \cref{thm:scalarize}, there is an effectively computable corresponding scalar function $f' := (C_1,\ldots,C_n)\ e'$ that is equivalent to $f$.
We can now give the following rewrite rules leveraging this correspondence:

\begin{align*}
    e_1 \texttt{ (.} f \texttt{) } e_2                            & \Rightarrow \Apply[f'](e_1, e_2)                                                        \\
    \texttt{apply(} f \texttt{, } e \texttt{)}                    & \Rightarrow \Apply[f'](e)                                                               \\
    \texttt{apply(} f \texttt{, } e_1 \texttt{, } e_2 \texttt{)}  & \Rightarrow \Apply[f'](e_1, \mathsf{fill}(e_1, e_2))                                    \\
    \texttt{select(} f \texttt{, } e \texttt{)}                   & \Rightarrow \Apply[(C_1)\ C_1 \cdot \cast(R, e')](e)                                    \\
    \texttt{select(} f \texttt{, } e_1 \texttt{, } e_2 \texttt{)} & \Rightarrow  \Apply[(C_1, C_2)\ C_1 \cdot \cast(R_1, e')](e_1, \mathsf{fill}(e_1, e_2)) \\
\end{align*}
where $R, R_1, R_2$ are the semirings of $e,e_1,e_2$, respectively.

Function reference syntax subsumes $\Apply[f]$ from \graphalg{} Core.
An important difference is that function application in \graphalg{} is limited to function with arity 2, whereas function $f$ in $\Apply[f]$ can have arbitrary arity.
However, this does not limit the expressive power of the language, because scalar operators have an arity of at most 2.
As an example, consider the \graphalg{} Core expression $\Apply[(a,b,c)\ a \cdot b + c](e_1, e_2, e_3)$.
Clearly, the \graphalg{} expression $e_1 \text{ (.*) } e_2 \texttt{ (.+) } e_3$ is equivalent.

\subsection{Statements}\label{sec:stmts}
We now turn to the syntax for statements inside function bodies, as introduced in \cref{sec:full-funcs}.
We show that a function body of the form $S\ \texttt{return } e;$ can be written as a single \graphalg{} expression.
By successively applying such rewrites, function bodies with an arbitrary number of statements can be reduced to as single expression, e.g., $S_1\ S_2\ \texttt{return } e \Rightarrow S_1\ \texttt{return } e' \Rightarrow \texttt{return } e''$.

We start with the simplest case, which is plain assignment:

\[
    M \texttt{ = } e_1 \texttt{; return } e_2 \texttt{;} \Rightarrow \texttt{return } \LetIn{M}{e_1}{e_2} \texttt{;}
\]

All other statements can be expressed as plain assignments.
Accumulation becomes pointwise addition with the previous definition of $M$:

\[
    M \texttt{ += } e \texttt{;} \Rightarrow M \texttt{ = } M \texttt{ (.+) } e \texttt{;}
\]

Assignment with fill syntax adjusts the dimensions of $e$ to match those of $M$:

\begin{align*}
    M\texttt{[:]} \texttt{ = } e \texttt{;}    & \Rightarrow M \texttt{ = } \one(M) \cdot e \texttt{;}     \\
    M\texttt{[:, :]} \texttt{ = } e \texttt{;} & \Rightarrow M \texttt{ = } \mathsf{fill}(M, e) \texttt{;} \\
\end{align*}

Fill syntax subsumes the $\one(e)$ expression of \graphalg{} Core.
Given an expression $e$ with semiring $R$, $\one(e)$ can simulated as:

\[
    M = \texttt{Vector<}R\texttt{>}(e.\texttt{nrows})\texttt{; }
    M\texttt{[:]} = \texttt{one}(R)\texttt{;}
\]
where $M$ is equivalent to $\one(e)$.

Masking performs pointwise conditional assignment.
To rewrite masking, we use the following helper function:

\[
    \mathsf{mask}(B, M, C) = \texttt{cast<}R \texttt{>}(M) \cdot C + \texttt{cast<}R \texttt{>}(\texttt{!}M) \cdot B
\]
where $R$ is the semiring of $B$ (and of $C$).
We choose this definition to obtain the following properties:

\begin{align*}
    \mathsf{mask}(B, \texttt{true}\ , C) = C \\
    \mathsf{mask}(B, \texttt{false}, C) = B
\end{align*}

The rewrite rules are then defined as:

\begin{align*}
    M_1\texttt{<}M_2\texttt{>} \texttt{ = } e \texttt{;}  & \Rightarrow M_1 \texttt{ = } \Apply[\mathsf{mask}](M_1, \cast(\ringBool, M_2), e) \texttt{;}           \\
    M_1\texttt{<!}M_2\texttt{>} \texttt{ = } e \texttt{;} & \Rightarrow M_1 \texttt{ = } \Apply[\mathsf{mask}](M_1, \texttt{!}\cast(\ringBool, M_2), e) \texttt{;} \\
\end{align*}

Fill and mask syntax may be combined in a single statement.
This is equivalent to filling followed by masking.
With $T$ a fresh matrix variable, the rewrite rule to split them into separate statements is:

\begin{tabbing}
    LLL0\=LLL1\=L2\=L3\=L4 \kill
    \>$M_1\texttt{<}M_2\texttt{>[:]} \texttt{ = }  e \texttt{;} \Rightarrow$ \\
    \>\>$T \texttt{ = } M_1 \texttt{;}$ \\
    \>\>$T \texttt{[:]} \texttt{ = } e \texttt{;}$ \\
    \>\>$M_1\texttt{<}M_2\texttt{>} \texttt{ = } T \texttt{;}$ \\
\end{tabbing}

The semantics of fill and mask syntax are modeled after the GraphBLAS.

\graphalg{} supports two types of loop statements: loops over a constant range, and loops over matrix dimensions.
We treat these two loops constructs separately, as their rewrites are mostly distinct.
We first give rewrites for the simpler cases without the optional \texttt{until} syntax, and afterwards extend the rewrites to handle this case too.

Consider constant range loops of the form $\texttt{for } M \texttt{ in } e_1 \texttt{:} e_2 \texttt{ \{} S_1 \cdots S_k \texttt{\}}$.
The expressions $e_1$ and $e_2$ must be integer literals (or expressions that simplify to literals through constant propagation) that represent a non-empty range.
This requirement allows us to represent loops in \graphalg{} Core by \emph{unrolling} them.
Let $e_1:e_2$ represent range $[l, h)$,
then the loop can be represented as the following sequence of statements:

\begin{table*}[h]
    \begin{tabular}{l l}
        \qquad $\texttt{for } M \texttt{ in } e_1 \texttt{:} e_2 \texttt{ \{} S_1 \cdots S_k \texttt{\}} \Rightarrow$ \\
        \qquad $M \texttt{ = } e_1 \texttt{;}$                                                                        \\
        \qquad $S_1 \cdots S_k$ & \multirow{2}{*}{\Big \} $h-l$ times}                                                \\
        \qquad $M \texttt{ = } \Apply[(C)\ C + \ringInt(1)](M) \texttt{;}$                                            \\
    \end{tabular}
\end{table*}

\noindent where $M \texttt{ = } \Apply[\cdots](M)$ increments iteration variable $M$ after every loop iteration.

Loops of matrix dimensions can be represented using the loop expression from \graphalg{} Core.
Without loss of generality, let loop body $S_1,\ldots,S_k$ be a sequence of plain assignments (as established at the start of this section, all statements can be rewritten into sequences of plain assignments).
Let $M_1,\ldots,M_n$ be the matrix variables assigned to in $S_1,\ldots,S_k$ with a previous definition before the loop.
For each $i \in \{ 1, \ldots, n \}$, let $e_i$ be the \graphalg{} Core expression obtained by rewriting the function body $S_1,\ldots,S_k \texttt{ return } M_i \texttt{;}$.
Finally, the loop can then be rewritten as:

\begin{tabbing}
    LLL0\=LLL1\=L2\=L3\=L4 \kill
    \>$\texttt{for } M \texttt{ in } e \texttt{ \{} S_1 \cdots S_k \texttt{\}} \Rightarrow$ \\
    \>$T_1 \texttt{ = } M_1 \texttt{;} \cdots T_n \texttt{ = } M_n \texttt{;}$ \\
    \>$M_1 \texttt{ = } \texttt{for}\, \{ M_1 := e_1, \ldots, M_n := e_n \}(\mathsf{ndim}(e), T_1, \ldots, T_n)$ \\
    \>$M_2 \texttt{ = } \texttt{for}\, \{ $ \\
    \>\>$M_2 := e_2, M_1 := e_1, M_3 := e_3, \ldots, M_n := e_n$ \\
    \>$\}(\mathsf{ndim}(e), T_2, T_1, T_3, \ldots, T_n)$ \\
    \>$\vdots$ \\
    \>$M_n \texttt{ = } \texttt{for}\, \{$\\
    \>\>$M_n := e_n, M_1 := e_1, \ldots, M_{n-1} := e_{n-1}$\\
    \>$\}(\mathsf{ndim}(e), T_n, T_1, \ldots, T_{n-1})$ \\
\end{tabbing}
where $T_1,\ldots T_n$ are fresh variables.
The loop is duplicated for every $M_i$, as a loop expression can only have a single result.
Every instance of the loop uses a different first loop binding to reflect the final desired loop result.
Because the ordering of bindings does not otherwise influence the behaviour of the loop, and all operations in \graphalg{} Core are free of side effects and deterministic, loops can be safely duplicated.

\graphalg{} supports early loop termination using the \texttt{until} construct.
While a real \graphalg{} implementation should typically treat such a break condition specially for performance reasons, it can also be simulated.
Assume an arbitrary loop statement with a break condition $\texttt{for } M \texttt{ in } \cdots \texttt{ \{} \cdots \texttt{\} } \texttt{until } e \texttt{;}$.
Without loss of generality, let the loop body be a sequence of plain assignments $M_1 \texttt{ = } e_1 \texttt{;} \cdots M_k \texttt{ = } e_k \texttt{;}$.
This loop can be rewritten without \texttt{until} as:

\begin{tabbing}
    LLL0\=LLL1\=L2\=L3\=L4 \kill
    \>$\texttt{for } M \texttt{ in } \cdots \texttt{ \{} M_1 \texttt{ = } e_1 \texttt{;} \cdots M_k \texttt{ = } e_k \texttt{;\} } \texttt{until } e \texttt{;} \Rightarrow$ \\
    \>$U \texttt{ = } \ringBool(\text{false}) \texttt{;}$ \\
    \>$\texttt{for } M \texttt{ in } \cdots \texttt{ \{}$ \\
    \>\>$U_1 \texttt{ = } \mathsf{fill}(M_1, U) \texttt{;}$ \\
    \>\>$M_1 \texttt{<!} U_1 \texttt{> = } e_1 \texttt{;}$ \\
    \>\>$\ \vdots$ \\
    \>\>$U_k \texttt{ = } \mathsf{fill}(M_k, U) \texttt{;}$ \\
    \>\>$M_k \texttt{<!} U_k \texttt{> = } e_k \texttt{;}$ \\
    \>\>$U \texttt{ = } e \texttt{;}$ \\
    \>$\texttt{\}}$ \\
\end{tabbing}
where $U_1,\ldots,U_k$ and $U$ are fresh variables.
Variable $U$ tracks whether the break condition has been hit.
Each $U_i$ is a mask matrix of the same size as $M_1$ filled with $U$.
By transforming assignment $M_i \texttt{ = } e_i$ into masked assignment $M_i \texttt{<!} U_i \texttt{> = } e_i \texttt{;}$
variable $M_i$ is modified only if the break condition has not yet been hit, preserving the current value otherwise.

Loop syntax in \graphalg{} subsumes the loop expressions of \graphalg{} Core.
Let $\texttt{for}\, \{ M_1 := e_1', \ldots, M_m := e_m' \}(e_d, e_1, \ldots, e_m)$ be an arbitrary loop expression.
We can write this in \graphalg{} as:

\begin{tabbing}
    LLL0\=LLL1\=L2\=L3\=L4 \kill
    \>$M_1 \texttt{ = } e_1 \texttt{;} \cdots M_m \texttt{ = } e_m \texttt{;}$ \\
    \>$\texttt{for } M \texttt{ in } e_d \texttt{.nrows \{}$ \\
    \>\>$T_1 \texttt{ = } e_1' \texttt{;} \cdots T_m \texttt{ = } e_m' \texttt{;}$ \\
    \>\>$M_1 \texttt{ = } T_1 \texttt{;} \cdots M_m \texttt{ = } T_m \texttt{;}$ \\
    \>$\texttt{\}}$ \\
\end{tabbing}
where $T_1,\ldots T_n$ are fresh variables.
After the loop statement, $M_1$ is equivalent to the original \graphalg{} Core expression.

\subsection{Relation to $\ML$}
We are now ready to show that for-$\ML$ extended with support for simultaneous induction (that is, sifor-$\ML$) is powerful enough to simulate \graphalg{} programs.

\begin{corollary}\label{cor:sifor-eq}
    For any \graphalg{} well-typed function $f$ and matrices $A_1,\ldots,A_n$ matching the parameter types of $f$, there exists an expression $e$ in the sifor-$\ML$ language that simulates $f(A_1,\ldots,A_n)$.
\end{corollary}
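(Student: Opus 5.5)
The proof chains the translations established earlier in the paper. Starting from a well-typed \graphalg{} function $f$, I first use \cref{sec:stmts} to eliminate all statements. Accumulation, fill, mask and \texttt{until} are reduced to plain assignments. Constant-range loops are unrolled, and matrix-size loops become \graphalg{} Core loop expressions. Plain assignments are then folded into $\SyntaxStyle{let}$ bindings, which can be removed by substitution. After this, $f$ has the form $\texttt{func } f(\ldots) \texttt{ -> } t \texttt{ \{ return } e \texttt{; \}}$.

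Next I remove function references via \cref{thm:scalarize} and the rewrite rules of \cref{sec:func-refs}, processing functions in program order. A referenced function is defined earlier and has only $1\times 1$ types, so by induction it is already a single expression and can be scalarized into a pointwise function in $\Omega$. The remaining sugar (reductions, pointwise operators, matrix properties, \texttt{Matrix}/\texttt{Vector} constructors) is rewritten directly. By \cref{cor:full-to-core} we then obtain a \graphalg{} Core expression $e'$ whose evaluation on $\I(M_i)=A_i$ equals $f(A_1,\ldots,A_n)$.

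Since \graphalg{} Core is an instantiation of muse-$\ML$ with the finite set $\texttt{SRing}$, I apply \cref{thm:muse-eq-dec} to $e'$. I choose $T$ as a semiring of maximal cardinality among those used, for instance $\ringReal$, which is uncountable and dominates the countable integer and boolean semirings and equals the other real semirings in cardinality. This gives a dec-$\ML$ expression over $T$, with the inputs encoded by $\mathsf{enc}$ and the output decoded by $\mathsf{dec}$. That encoding and decoding is exactly what \emph{simulates} means here. Finally, \cref{thm:sifor-eq-dec} converts the dec-$\ML$ expression into sifor-$\ML$: each $\pickAny$ becomes nested canonical loops, and each dec-loop becomes a sifor-loop using $e_d$ as the iteration vector, which requires binding $e_d$ to a fresh variable.

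I expect the main obstacle to be checking side conditions rather than any new construction. Three need care:
\begin{enumerate}
\item Constant-range unrolling needs literal bounds, which is guaranteed for valid programs.
\item The scalarization of \cref{thm:scalarize} must respect the type restriction that every variable in the schema is $1\times 1$, including loop-bound variables.
\item The simulation of $\pickAny$ in \cref{thm:sifor-eq-dec} assumes the helper function $f$ and subtraction are available in $\Omega$. The $\encoded{\Omega}$ repertoire of \cref{thm:muse-eq-dec} supplies $\mathsf{cond}$ and $(-)$, and I would add $f$ to that repertoire in the same way.
\end{enumerate}
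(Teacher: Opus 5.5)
Your proposal is correct and follows the same chain as the paper's proof. \cref{cor:full-to-core} gives a \graphalg{} Core (hence muse-$\ML$) expression, \cref{thm:muse-eq-dec} converts it to dec-$\ML$ over a maximal-cardinality semiring $T$, and \cref{thm:sifor-eq-dec} converts that to sifor-$\ML$; your extra bookkeeping (binding $e_d$ to a fresh variable, enlarging the repertoire, reading ``simulates'' via $\mathsf{enc}/\mathsf{dec}$) only makes explicit what the paper leaves implicit. One small correction to your third side condition: the $\pickAny$ simulation needs $f$ and pointwise addition over $T$ (for $X + v \cdot F$), whereas subtraction is used by the sifor-to-dec loop rewrite and the $S_{\leq}$ helper, not by $\pickAny$.
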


\begin{figure}[htbp]
    \centering
    \begin{tikzpicture}[
            font=\sffamily\small,
            every node/.style={font=\sffamily\small},
            ann/.style={font=\sffamily\scriptsize\itshape, inner sep=1pt, text=black!60},
            lbox/.style={draw=cbBlue, line width=0.6pt, rounded corners=4pt},
        ]
        \node[lbox, fill=cbBlue!35, minimum width=40mm, minimum height=11mm,
            inner sep=0pt] (ml) {$\ML$};
        \node[ann, above=2mm of ml] (forann) {(for loops)};
        \node[above=0.5mm of forann, font=\sffamily\small] (fortitle) {for-$\ML$};
        \node[fit=(ml)(fortitle)(forann), inner sep=5pt, draw=none] (formlshape) {};
        \node[ann, above=3mm of formlshape.north] (outerann) {(simultaneous induction)};
        \node[above=0.5mm of outerann, font=\sffamily\small] (outertitle2) {$\equiv$ \graphalg{} Core $\equiv$ \graphalg{}};
        \node[above=0.5mm of outertitle2, font=\sffamily\small] (outertitle1) {sifor-$\ML\ \equiv\ $dec-$\ML\ \equiv\ $muse-$\ML$};
        \begin{pgfonlayer}{bgOuter}
            \node[lbox, fill=cbBlue!6, fit=(formlshape)(outertitle1)(outertitle2)(outerann),
                inner sep=7pt] (outer) {};
        \end{pgfonlayer}
        \begin{pgfonlayer}{bgForml}
            \node[lbox, fill=cbBlue!18, fit=(ml)(fortitle)(forann), inner sep=5pt] (forml) {};
        \end{pgfonlayer}
    \end{tikzpicture}
    \caption{
        Expressive-power hierarchy underlying \cref{cor:sifor-eq}.
        $\ML$ is extended by for-$\ML$ (adding a canonical for-loop, \cref{sec:prelim}),
        which is in turn subsumed by sifor-$\ML$ (adding simultaneous induction, \cref{sec:sifor-ml}).
        By \cref{thm:sifor-eq-dec} and \cref{thm:muse-eq-dec}, dec-$\ML$ and muse-$\ML$ have the same expressive power as sifor-$\ML$.
        \graphalg{} Core (\cref{sec:core-lang}) instantiates muse-$\ML$ with a fixed repertoire $\Omega$ and semiring set;
        the containment chain shown here holds when this instantiation propagates to the other languages as described in \cref{thm:sifor-eq-dec,thm:muse-eq-dec}.
        \graphalg{} adds syntactic sugar on top of \graphalg{} Core.
    }
    \label{fig:cor-chain}
\end{figure}

\begin{proof}
    By \cref{cor:full-to-core}, there is a \graphalg{} Core expression $e$ that evaluates $f(A_1,\ldots,A_n)$, which by the definition of \graphalg{} Core is also valid muse-$\ML$ expression.
    Then by \cref{thm:muse-eq-dec}, there exists an expression $e^{*}$ in dec-$\ML$ that is equivalent to $e$.
    Furthermore, by \cref{thm:sifor-eq-dec} there exists an expression $e'$ in sifor-$\ML$ that simulates $e^{*}$.
    \Cref{fig:cor-chain} situates these languages in the wider hierarchy built on top of $\ML$.
\end{proof}

\section{Conclusion}
In this work, we have shown how \graphalg{} is derived from $\ML$ through a number of extensions.
We find that any \graphalg{} program can be simulated in sifor-$\ML$, an extension of for-$\ML$ that supports simultaneous extension.
The various other changes we have discussed in this work to distinguish \graphalg{} from sifor-$\ML$ do not influence the expressive power of the language.

\bibliographystyle{plain}
\bibliography{zotero}

\end{document}